\renewcommand*\env@matrix[1][*\c@MaxMatrixCols c]{%
  \hskip -\arraycolsep
  \let\@ifnextchar\new@ifnextchar
  \array{#1}}
\newcommand{\be}{\begin{eqnarray}}
\newcommand{\ee}{\end{eqnarray}}
\newcommand{\ben}{\begin{enumerate}}
\newcommand{\een}{\end{enumerate}}
\newcommand{\beq}{\begin{equation}}
\newcommand{\eeq}{\end{equation}}
\newcommand{\beqa}{\begin{eqnarray*}}
\newcommand{\eeqa}{\end{eqnarray*}}
\newcommand{\bit}{\begin{itemize}}
\newcommand{\eit}{\end{itemize}}
\newcommand{\bt}{\begin{tabular}{c}}
\newcommand{\btt}{\begin{tabular}}
\newcommand{\et}{\end{tabular}}
\newtheorem{definition}{Definition}
\newtheorem{theorem}{Theorem}
\newtheorem{remark}{Remark}
\newtheorem{note}{Note}
\newcommand{\squishlist}{
   \begin{list}{$\bullet$}
    { \setlength{\itemsep}{0pt}      \setlength{\parsep}{0pt}
      \setlength{\topsep}{3pt}       \setlength{\partopsep}{0pt}
      \setlength{\listparindent}{-2pt}
      \setlength{\itemindent}{-5pt}
      \setlength{\leftmargin}{1em} \setlength{\labelwidth}{0em}
      \setlength{\labelsep}{0.5em} } }
\newcommand{\squishend}{
    \end{list}  }
\begin{document}
\title{An agile and distributed mechanism for inter-domain network slicing in next generation mobile networks}
\author[1, 2]{ Jalal Khamse-Ashari \thanks{* The paper has been accepted for publication at IEEE Transactions on Mobile Computing. Please refer to DOI 10.1109/TMC.2021.3061613 for the final version.}}
\author[2]{ Gamini Senarath}
\author[1, 2]{ Irem Bor-Yaliniz}
\author[1]{ Halim Yanikomeroglu}
\affil[1]{ Department of Systems and Computer Engineering, Carleton University, Ottawa, Canada}
\affil[2]{ Huawei Canada Research Center, Ottawa, Canada}
\maketitle

\begin{abstract}
  Network slicing is emerging as a promising method to provide sought after versatility and flexibility to cope with ever-increasing demands. To realize such potential advantages and to meet the challenging requirements of various network slices in an on-demand fashion, we need to develop an agile and distributed mechanism for resource provisioning to different network slices in a heterogeneous multi-resource multi-domain mobile network environment.
   We formulate inter-domain resource provisioning to network slices in such an environment as an optimization problem
   which maximizes social welfare among network slice tenants (so that maximizing tenants' satisfaction),
   while minimizing operational expenditures for infrastructure service providers at the same time.
   To solve the envisioned problem, we implement an iterative auction game among network slice tenants,
  on one hand, and a plurality of price-taking subnet service providers, on the other hand.
  We show that the proposed solution method results in a distributed privacy-saving mechanism which converges to the \emph{optimal solution} of the described optimization problem.
  In addition to providing analytical results to characterize the performance of the proposed mechanism,
  we also employ numerical evaluations to validate the results, demonstrate convergence of the presented algorithm,
  and show the enhanced performance of the proposed approach (in terms of resource utilization, fairness and operational costs)
  against the existing solutions.
\end{abstract}

\begin{IEEEkeywords}
Inter-domain network slicing, virtualization, end-to-end service provisioning, distributed implementation, multi-resource allocation.
\end{IEEEkeywords}

\section{Introduction}\label{sec:introduction}
Network slicing is expected to become an integral part of next generation mobile networks,
since slicing helps networks to be more versatile.
Built upon recently developed technologies, such as network function virtualization (NFV) and software defined networks (SDN),
network slicing lets multiple logical networks share the same physical infrastructure.
This virtualization technique can enable one network providing multiple services with extremely
different requirements, while maintaining isolation.
As opposed to many QoS assurance methods, network slicing differentiates between different traffic as per their requirements, and for the
same kind of traffic as per different tenants\footnote{Tenant can be thought as a group of users or a 3rd-party consumer using
the communication services to provide other communication services to users. }. Therefore, network slicing is a
unique method for end-to-end granular network management and service provisioning.
Moreover, by integrating with other technologies (such as cloud computing, cloud-RAN~\cite{checko2015cloud} and
mobile edge computing~\cite{taleb2017multi}), it introduces
more flexibility in deployment resulting in significant reduction in both operational and capital expenditures.
Accordingly, network slicing has attracted significant attention from both industry and academia.
While the leading standardization bodies for the next generation wireless networks included network slicing in their
work items~\cite{FCC_netSlicing, GSMA_NS, NFMN_NS, 3Gpp_NS}, there is also a significant number of studies from the academia investigating
characteristics and dynamics of network slicing \cite{foukas2017network, zhang2017network, zhou2016network, liang2015wireless, nikaein2015network,
costa2013radio, 5GmobileArch, samdanis2016network}.

From the control plane (CP) perspective in 3GPP 5G standardization,
a network slice can consist of several network slice instances
(NSIs). NSIs help not only to adjust the network slice resources
based on the service requirements, but also to differentiate between even the same type of QoS flows based on granular network policies.
The network management (NM) perspective involves also the network slice subnet instance (NSSI) concept
for granular management of network domains (e.g., radio access network (RAN), core network (CN), transport network (TN)) and provides different levels of exposure to the tenants.
Briefly, a managed NSI can consist of several NSSIs.
The NSSIs can be chosen by the network manager based on geographical circumstances,
network and access technology (e.g., a mmWave NSSI, 4G core NSSI),
vendor differences (e.g., parts of network resources are from vendor X and parts of network resources are from vendor Y),
and other factors that are significant from the network management perspective.
In this study it is assumed that NSSIs are generated based on network domain and geographical region.

To achieve the potential advantages of network slicing, network operators need to address several challenges,
such as providing QoS guarantees for different network slices/services, while efficiently utilizing the capacity of the infrastructure network.
Fulfilling such requirements mainly depends on the underlying resource provisioning mechanism that is used for resource management
and placement of virtualized network functions (VNFs) \cite{zhang2017network, zhou2016network, clayman2014dynamic}.
There are several work in the literature which study resource allocation to VNFs in the context of a general topology network.
Indeed, the VNF resource allocation problem can be traced-back/reduced to the virtual network embedding (VNE) problem,
wherein a virtual network is embedded on the top of a substrate network~\cite{fischer2013virtual}.
However, it usually results in a mixed integer linear program (MILP) which is shown to be NP-hard~\cite{amaldi2016computational}.
Hence, different heuristic methods are proposed to address VNF placement and resource provisioning in the network of an
infrastructure service provider (ISP) or an enterprise network~\cite{ghaznavi2015elastic, addis2015virtual, zhang2017joint,
 bari2015orchestrating, addad2018towards}. The reader may refer to Section~\ref{sec:related_work} for a more detailed literature survey.

By capturing the underlying resource model in a real mobile network environment, leveraging the possibility for multi-path connectivity,
and by exploiting the constraints that are imposed for placement of VNFs in practice,
we propose a novel formulation for end-to-end resource provisioning,
which avoids intractable complexities of solving an MILP.
Particularly, given the placement constraints, there remains a limited number of nodes in each domain, and therefore a limited number of paths which can provide service to NSIs in a certain region.  By considering pre-determined paths\footnote{\footnotesize It is assumed that the paths are a-priori determined by running a path finding algorithm. The detailed implementation of such an algorithm, however, is out of the scope of this paper.}
 (each comprising a pre-determined chain of VNFs) and exploiting the possibility for multi-path connectivity, we avoid intractable complexities of solving an MILP (for placement of VNFs),
while yet providing the flexibility to optimize routing across different paths/chains-of-VNFs.
The formulated problem indeed relaxes the constraint of single path routing of MILPs by exploiting the possibility for multi-path connectivity in next generation networks. A detailed description of the system model is presented in Section~\ref{sec:SysModel}.
 The proposed solution presents a \emph{market equilibrium} approach which
 results in an agile and distributed mechanism for end-to-end resource provisioning to NSIs in a \emph{multi-resource multi-domain}
 mobile network environment (such as an integrated terrestrial-aerial-satellite network).
{The proposed market-based solution best resembles the real-world interaction between the infrastructure service providers (in a vertical heterogeneous network~\cite{alzenad2019coverage}) and virtual mobile network operators (i.e., tenants) which acquire resources to implement different NSIs.
Market-based mechanisms have recently received considerable attention~\cite{luong2017resource, nguyen2018price, halabian2019distributed},
 since they may lead to a desirable performance in terms of
 resource utilization and energy-efficiency, in addition to maximizing social welfare and user satisfaction.}
Of course the distributed implementation comes at the price of a signaling overhead to exchange certain information between the network slice tenants and the infrastructure providers. The required information to exchange, however, is kept minimal as it is limited only to the resource prices and the allocated resources. Such an occasional signaling (which should be performed in case of an update) may not be significant compared to the persistent measurement and monitoring signaling that is usually communicated for the sake of network management and maintenance.

{\bf Contributions:} The contributions of this paper are summarized as follows.
\squishlist
\item {\bf Problem formulation:} We propose a new formulation for \emph{inter-domain resource provisioning to network slices},
       developing a framework which unifies the allocation of different types of resources
       (including network bandwidth as well as computing resources) in a heterogeneous multi-domain environment.
      We formulate resource provisioning to network slices in such an environment as a \emph{concave maximization problem}
       which maximizes social welfare among network slice tenants, while minimizing operational expenditures (OPEX)
       for infrastructure service providers at the same time.
\item {\bf An auction-based solution:} To solve the proposed resource provisioning problem, we devise an iterative auction game among network slice tenants, each bidding for different resources so as to maximize a local \emph{payoff} function. The infrastructure service providers (owning data centers or access point (APs) across different domains), on the other hand, decide on the resource prices. The described game is shown to be at a Nash equilibrium if and only if it is at an optimal solution to the global concave optimization problem (which provably results in a unique optimal traffic volume for each NSI).
\item {\bf Characterizing the solution:}
    It is shown that the proposed approach results in a \emph{distributed privacy-saving} mechanism which does not require sharing any private information  (e.g., resource capacities of data centers or APs, and demand profile or payoff function of tenants) among different parties.
      We further analyze the performance of the proposed mechanism, by \emph{demonstrating certain properties} (such as \emph{envy-freeness}
      and \emph{sharing incentive}) that are deemed desirable for \emph{efficient and fair} allocation of resources.
\item {\bf Demonstrating the performance:} In addition to \emph{analytical results}, we also employ numerical evaluations to show the validity of the results, demonstrate convergence of the presented algorithm, and show the superior performance of the proposed mechanism
      (in terms of resource utilization, fairness and operational expenditures) compared to heuristics and other existing solutions.
\squishend

{\bf Organization:}
The background and related work is presented in Section~\ref{sec:related_work}.
The system model is characterized in Section~\ref{sec:SysModel}.
Our proposed distributed resource provisioning mechanism is described in Section~\ref{sec:main}.
 Some import extensions to the original formulation (such as considering budget-constrained tenants, and exploiting the capabilities at the mobile edge) are presented in Section~\ref{sec:extension}.
The numerical results on evaluating the performance of the proposed scheme are reported in Section~\ref{sec:Num_res}.
The paper is concluded in Section~\ref{sec:conclusion}.

\section{Background and Related Work}\label{sec:related_work}

\subsection{Background on Enabling Technologies}
Network slicing is a promising solution for the next generation mobile networks which allows to build multiple logical
 networks on the top of a shared infrastructure, so that (virtual) mobile network operators may provide services tailored for
different network slices with different QoS requirements.
To achieve potential advantages of network slicing, considerable research activities are dedicated to developing
the underlying/enabling technologies (such as, NFV, VMs, containers, etc.) that are required for implementation of virtualized network functions and services~\cite{zhang2016flurries,martins2014clickos, palkar2015e2, zhang2016opennetvm, bremler2016openbox}. NFV decouples the software implementation of  network functions from the underlying hardware. Hence, different network appliances and/or middle-box processings (such as firewalls, traffic shapers, etc.) can be implemented on an VM running on commercial off-the-shelf hardware (such as general purpose server, storage and switches), as opposed to dedicating specialized hardware devices for implementation of network functions/protocols in conventional communications networks~\cite{han2015network, herrera2016resource}. One of the main advantages of \emph{virtualized network functions} is programmability, so that future changes can be applied easily by just updating the software without the need to replacing the hardware~\cite{herrera2016resource}.

Another line of research in this area includes the works investigating complementary technologies such as CRAN (cloud/cetralized radio access network~\cite{checko2015cloud}), mobile edge computing (MEC), and fog computing~\cite{taleb2017multi, nguyen2018price}. CRAN is a new architecture which is introduced as a cost-efficient solution to address the scalability issue with the growing user demands in 5G mobile networks~\cite{checko2015cloud}. The main idea is to achieve multiplexing gain by pooling baseband units (BBU) from several base stations into a centralized radio access unit. It introduces substantial savings on both operational expenditures (due to enhanced energy-efficiency) and capital expenditures for implementation of BBUs. Moreover, it may improve the network performance by providing the possibility to perform joint processing of signals from different base station~\cite{checko2015cloud}.
Mobile edge computing, on the other hand, is an emerging platform which integrates new technologies, such as cloud computing and NFV, with the conventional telecommunication networks to provide computational capabilities at the mobile edge, enabling a wide range of new applications/services \cite{taleb2017multi}. 
In \cite{nguyen2018price} a market equilibrium approach has been proposed to efficiently allocate the resources at the mobile edge to budget constrained users.

\subsection{Related Work}
Different business models and architectural solutions have been presented for application of NFV and network slicing to wireless/mobile networks \cite{zhang2017network, zhou2016network, liang2015wireless}. In \cite{zhou2016network} the authors propose the concept of hierarchical network slice as a service, enabling the operators to provide customized end-to-end (E2E) cellular network as a service. In \cite{zhang2017network} an architecture is presented for RAN virtualization in network-slicing-based 5G networks.
Moreover, it discusses how to address different challenges that are involved in RAN virtualization (such as power control, channel allocation and mobility management) in the proposed architecture. Indeed, the concept of network slicing can be applied to different domains (including RAN, backhaul and core network) individually \cite{nikaein2015network, costa2013radio, RAN_slicing, bagaa2014service, bagaa2018efficient}, or providing an E2E network slice as a service~\cite{halabian2019distributed}. 

To achieve potential advantages of network slicing and to provide E2E QoS guarantees for different network slices with diverse E2E QoS requirements, an efficient E2E resource provisioning mechanism is required.
There are several works in the literature which study the problem of E2E resource provisioning to network slices in the context of a general topology network (e.g., an ISP or an enterprise network). The work in \cite{clayman2014dynamic} proposes a high-level E2E orchestration framework wherein the problem is broken to placement of virtual machines across the network, and then resource allocation to network functions on virtual nodes. However, it does not provide a technical solution for placement and resource allocation to VNFs. The work in \cite{bari2015orchestrating} formulates placement of chain of VNFs as an MILP, and then proposes a dynamic-programming-based heuristic which achieves a sub-optimal solution. The complex network theory is used in \cite{guan2018service} for ranking
the nodes of an infrastructure network, and then mapping them to VNFs.
A heuristic solution is proposed in \cite{zhang2017joint} for joint VNF placement and online request scheduling to the instantiated VNFs.
In \cite{addis2015virtual}, a multi-objective optimization problem (which minimizes links traffic along with the number of busy CPU cores) is formulated for optimal routing in the network of an ISP.
In \cite{ghaznavi2015elastic} a heuristic method is proposed for placement of elastic network functions, while it minimizes the operational expenditures in order to address the trade-off between bandwidth and host resource consumption.
Indeed, all of these studies and also the work in~\cite{addad2018towards} can be viewed as an extension/variant of \emph{virtual network embedding} problem,
resulting in an MILP which is shown to be NP-hard~\cite{amaldi2016computational}.
Hence, all of these studies come up with a heuristic solution.
Moreover, the underlying model in none of these work is comprehensive in the sense to consider an accurate resource model for data centers (comprising multiple types of resources with heterogeneous resource capacities).

The work in \cite{halabian2019distributed} is the most related study to the framework presented herein.
However, it does not provide the possibility to exploit different paths/chains-of-VNFs.
Indeed, the solution in \cite{halabian2019distributed} is based on a model assuming fixed placement of VNFs over a pre-determined path for each network slice. Moreover, it does not address cost-aware resource pricing and minimization of the OPEX.
Lastly, as we show in Section~\ref{sec:Num_res}, the solution of \cite{halabian2019distributed} may not satisfy some desirable fairness related property (such as sharing incentive). 
Adopting a market equilibrium approach, we develop a distributed mechanism for E2E resource provisioning to different network slices. The proposed mechanism is shown to maximize social welfare among network slice tenants, while minimizing the OPEX across different domains.
In our formulation, we consider a number of possible forwarding paths (each comprising a pre-determined chain of VNFs)
for each network slice. In this way, we avoid intractable complexities in placement of VNFs,
while yet providing the flexibility to optimize routing across different paths/chains-of-VNFs.
Moreover, we consider an accurate resource model (comprising multiple types of resources) for data centers.

\section{System Model}\label{sec:SysModel}
Based on the developments in 5G standardization, for
each end-to-end NSI $n$, $n = 1, 2,\cdots,N$, one may consider three
NSSIs in different domains, namely, radio access, backhaul,
and the core network. 
Fig.~\ref{fig:systemModel} shows a sample NSI comprising RAN, CRAN\footnote{CRAN is actually a part of 5G core network, based on 5G standardization by SA2 group. In this article, we use the terms ``CRAN" and ``core" to distinguish between the domains where these functionalities are implemented.}, and core NSSIs.
Each NSSI comprises a specific sequence of VNFs in a certain domain.
 Note that it is possible to consider several NSSIs in each domain.
In addition, there might be more than one service provider node (i.e., data center or AP) in each domain.
The resources (including network bandwidth and computing resources) provided by a data center or AP in a particular domain
are used to implement network functions for NSSIs in the same domain. Note that virtualization helps to tailor the network function chains
with respect to the services provided by different NSSIs. Hence, each data center or AP has to implement several network
function chains, depending on the network services provided by NSSIs.

\begin{figure*}[h!]
\centering
\includegraphics[width = 0.94\textwidth]{./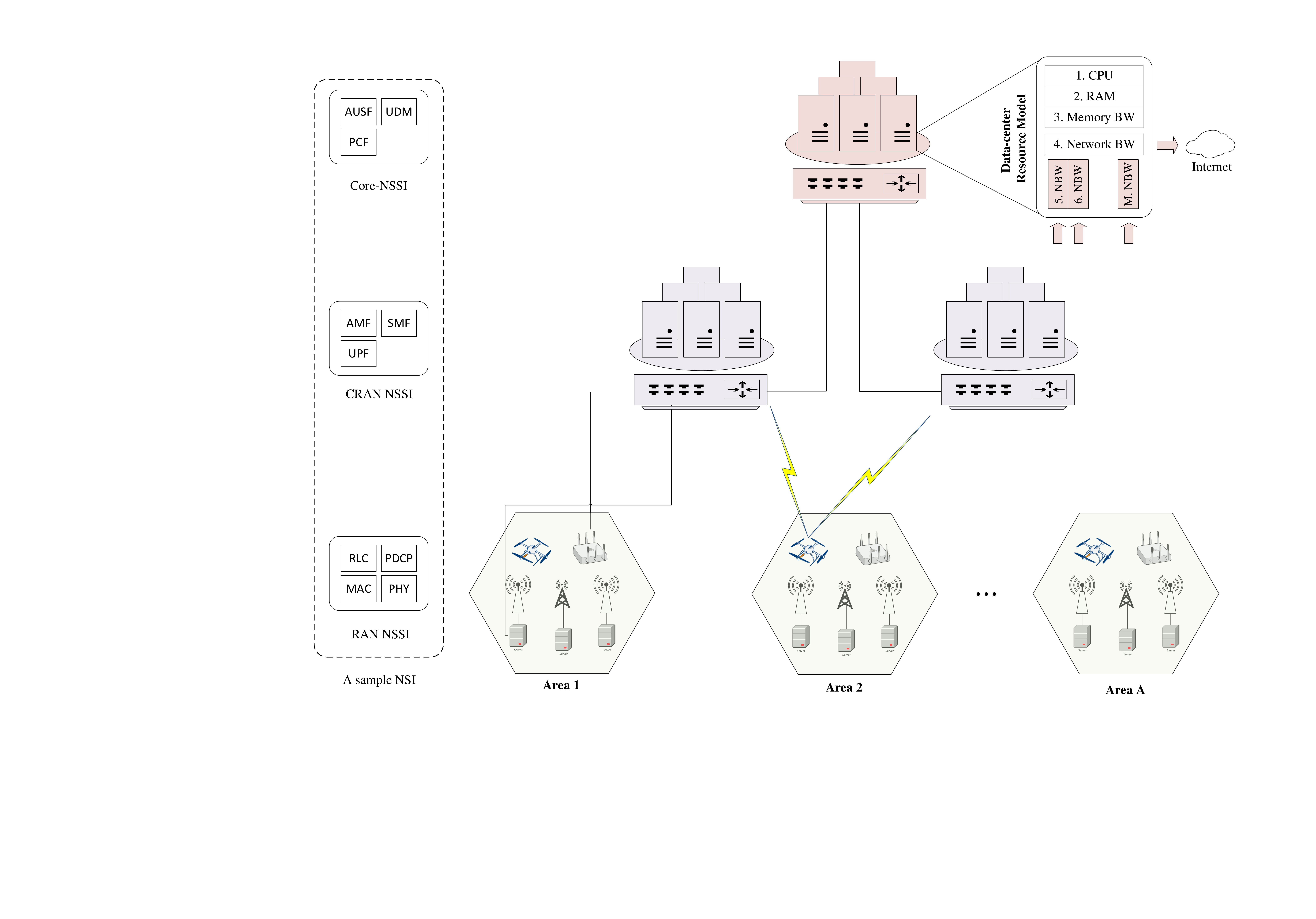}
\caption{\footnotesize End-to-end network slice virtualization in a multi-domain mobile network environment.
Sample paths from the radio access to the core network domain are shown in the figure.
Each AP or data center is in charge of allocating different types of resources
(including the inbound network bandwidth (NBW)) to VNFs of the NSSIs in each domain.
The resource model for a generic node is shown on the top-right corner.}
\label{fig:systemModel}
\end{figure*}

Given the deployed infrastructure, it is assumed that the traffic flow in each area, $l=1,2,...,A$, can be forwarded
over certain paths towards the core network~\cite{halabian2019distributed, rost2016mobile}.
Particularly, mobile users in each geographical area could be served by a number of local radio access units.
Then, the traffic flow from each radio access unit could be forwarded over pre-defined paths towards the core network.
We assume that the set of forwarding paths are pre-defined based on VNF placement constraints
(including vendor/technology compatibility, transmission latency constraints, etc.) for each geographical area.
Let $\mathcal{P}_l$ denote the set of paths originating from area $l$.
Each path $p\in\mathcal{P}_l$ originating from area $l$ comprises a sequence of \emph{nodes}\footnote{A node at the RAN represents an access-point/base-station, whereas a node at backhaul or core network may represent a server or a data center.} over different domains of the network, $p=\{l,  i, j, k\}$, $i\in\mathcal{I}_1$, $j\in\mathcal{I}_2$, $k\in\mathcal{I}_3$, where $\mathcal{I}_1$, $\mathcal{I}_2$, and $\mathcal{I}_3$ denote the set of node indices over RAN, CRAN, and core network, respectively.
A communication link with a certain nominal capacity is assumed between every two consecutive nodes of a path.
It is assumed that the nodes in each domain are in charge of allocating the inbound communication bandwidth.
In general, we use a capacity vector of size $M$, ${\bf C}_i=[C_{i,1},...,C_{i,r},...,C_{i,M}]$, to specify the amount of available resources,
such as CPU, RAM, memory bandwidth, the outbound bandwidth towards the internet, and the inbound communication bandwidths at each node $i$.
The resource model for a generic node is shown on the top-right corner of Fig.~\ref{fig:systemModel}.

We use the notion of demand vector~\cite{DRF}, ${\bf d}^p_{n,i}=[d_{n,i,r}^p]$, to specify the amount of different resources, $r = 1, 2, \cdots, M$, that are required for processing one unit of traffic for NSI $n$ when routed to node $i$ through path $p$.
The reader may note that ${\bf d}^p_{n,i}$ is node-dependent,
so that it may reflect a variable performance for the same NSI over different nodes.
For instance, at the RAN domain the number of physical resource blocks that are required for transmitting one unit of traffic for NSI $n$
might be different from one radio access point to another~\cite{RAN_slicing}. It is worth noting that the demand vector convention follows a physical-layer abstraction model to capture mid-term statistics of the RAN (averaged over few minutes, e.g., to absorb short-term variations), while yet providing the possibility to capture some physical layer complexities (e.g., variations that may occur due to UE-mobility).
 Moreover, the demand vector is slice-specific, so it can account for variable performance of different NSIs at the same node.
 This can be particularly useful to capture slice dependent complexities at the RAN.
The demand vector, ${\bf d}_{n,i}^p$, is also considered to be \emph{path dependent} so as to account for
(possibly) different chains of network functions over different paths. Also, it may account for multiple inbound communication links at each node (including the RAN).

As an example, consider a node (i.e., server) comprising 16 CPU cores, 32 GBytes of RAM, 1 Gb/s memory BW, 10 Gb/s outbound BW,
and 4 input ports each with a bandwidth of 2.5 Gb/s, complying with the resource model in Fig~\ref{fig:systemModel}.
The capacity vector here is represented as ${\bf C}_i=[16,32,1,10,2.5,2.5,2.5,2.5]$.
A demand vector, for instance, can be described\footnote{In practice, a hypervisor shares the CPU-time among different VMs running on the same server. So, any fraction of a CPU core can be assigned to each VM~\cite{DRF}.} as ${\bf d}^p_{n,i}=[0.5,2,0.1,0.75,1.1,0,0,0]$, which specifies the amount of different resources
required for processing one unit of traffic (e.g., 1 Gb/s) for NSI $n$ when routed through the first input port to node $i$.
Identifying demand vectors per routing paths is the key to account for limited bandwidth of communication links.
This is in contrast to existing works in the literature which only consider the overall network bandwidth over each data center or server~\cite{halabian2019distributed, fossati2019multi, nguyen2018market, leconte2018resource}.
Such studies may only address the limited capacity of the server switch in accepting the incoming traffic,
 but may not capture the capacity constraint of communication links.

Note that one mobile user can obtain multiple services, each provided by a different NSI. 
The key point is that the granularity level is per slice per traffic-flow, rather than per mobile user.
Let $x_n^p$ denote the \emph{capacity} {allocated/provisioned} to NSI $n$ over path $p$.
It actually represents the (maximum) volume of traffic that can be forwarded for NSI $n$ over path $p$.
The (maximum) volume of traffic which can be forwarded for the users/tenant of NSI $n$ in area $l$ is given by
\be
x_{n,l}:=\sum_{p\in{\mathcal P}_l}x_n^p.\label{x_area}
\ee
Indeed, the traffic admitted from users of NSI $n$ should be kept below $x_{n,l}$ in each area.
The resources allocated from node $i$ to
NSI $n$ (to provision $x_n^p$ units of traffic over each path $p$ which node $i$ belongs to) is given by ${\bf a}_{n,i}=[a_{n,i,r}]$,
\be
a_{n,i,r} = \sum_{p: i\in p}x_n^p {d}^p_{n,i,r}.\label{a_server}
\ee
Generally, for an allocation, ${\bf x}:=\{x_{n}^p\mid n\in\mathcal{N}, p\in\mathcal{P}_l, l=1,2,..,A\}$ to be feasible, the following has to be satisfied:
\be
\sum_{n\in\mathcal{N}}a_{n,i,r}=\sum_{n\in\mathcal{N}}\sum_{p:i\in p} x_n^pd^p_{n,i,r}\le C_{i,r},~\forall i,r.
\ee

\begin{table}
\vspace{+2mm}
\footnotesize
\caption{A summary of key notations}
\label{table3}
\begin{tabular}{| l | l |}
\hline
Notation & Description\\
\hline\hline
$C_{i,r}$                           & Capacity of resource $r$ at node $i$\\\hline
${\bf d}^p_{n,i}=[{d}^p_{n,i,r}]$   & The demand vector for NSI $n$ at node $i$ using path $p$\\\hline
${\bf a}_{n,i}=[a_{n,i,r}]$         & Vector of resources allocated from node $i$ to NSI $n$\\\hline
$x_n^p$                             & The traffic capacity provisioned to NSI $n$ over path $p$\\\hline
$\mathcal{P}_l$                     & The set of paths originating from area $l$\\\hline
$x_{n,l}$                           & The traffic capacity provisioned to NSI $n$ in area $l$\\\hline
$U_{n,l}(x_{n,l})$                  & The utility (function) for NSI $n$ in area $l$\\\hline
$q_{i,r}$                           & The OPEX for unit of resource $r$ at node $i$.\\\hline
$\mu_{i,r}$                         & The price to book one unit of resource $r$ at node $i$\\\hline
$\Pi_n({\bf x}_n; \mu)$               & The net payoff function for NSI $n$\\\hline
$w_{n,i,r}$                         & The payment of NSI $n$ to node $i$ for resource $r$\\\hline
\end{tabular}
\end{table}

 The key notations are summarized in Table~\ref{table3}.
It is worth noting that the resource capacity, $C_{i,r}$, represents the \emph{nominal capacity} of resource $r$ at node $i$. For example, the communication bandwidth of an AP is defined as the maximum achievable data rate when using the best modulation and coding scheme. Depending on their performance requirements and to rectify certain physical layer complexities, however, some NSIs may need to use lower order modulation and coding schemes. Such requirements can be flexibly reflected in the demand vectors.

\section{Distributed Resource Provisioning}\label{sec:main}
In this section, we first formulate the problem of resource provisioning to different network slices as a \emph{centralized system-wide} optimization problem, maximizing social welfare among network slice tenants, while minimizing OPEX for the infrastructure service providers at the same time.
We show that the described problem can be solved by implementing an auction game among network slice tenants on one hand, 
and a set of price-taking infrastructure service providers, on the other hand.
In the proposed solution, each network slice tenant maximizes a local \emph{payoff} function,
while a certain resource allocation and pricing scheme is implemented at each infrastructure node.
We further characterize the performance of the proposed mechanism by demonstrating
certain properties which are highly desirable for efficient and fair allocation of the resources.

\subsection{The System-Wide Objective}
Each network slice tenant may gain a utility (i.e., \emph{profit}) of $U_{n,l}(x_{n,l})$ out of the allocated capacity in each area $l$.
The utility function for each NSI $n$ can be reasonably represented by a \emph{concave function}. 

\vspace{+2mm}
\noindent{\bf Problem~1: The system-wide optimization problem}
\be
&& \max_{\bf x} \sum_{n,l}U_{n,l}\left(\sum_{p\in\mathcal{P}_l}x_n^p\right)-\sum_{i,r}q_{i,r}\sum_n\sum_{p:i\in p}x_n^pd^p_{n,i,r}\qquad\label{global_obj}\\
&& \text{s.t.} \sum_n\sum_{p:i\in p}x_n^pd^p_{n,i,r}\le C_{i,r}, ~~\forall i,r,\label{global_c1}\\
&& \text{~~~~} x_n^p\ge0, ~~\forall n,p\label{global_c2}.
\ee
The parameter $q_{i,r}$ is the OPEX for node $i$ to provision one unit of resource $r$.
So the objective in Problem~1 is to \emph{maximize the overall utility} for different NSIs (i.e., \emph{maximizing social welfare}),
while minimizing OPEX for service providers. The following solution method and the presented analytical results are established
\emph{in general for every utility function}, $U_{n,l}(\cdot)$, that is continuously differentiable and strictly concave.
As an example, one may assume $U_{n,l}(z)$ from a common class of utility functions for which the derivative (i.e., \emph{marginal benefit}) is described as $U'_{n,l}(z)=(\phi_{n,l}/z)^{\alpha_n}, ~\alpha_n>0$ ~\cite{BG92, kelly1998rate}.
For this class of functions, the parameter $\alpha_n$ (which typically takes on a value in the range of $[1,2]$) determines the shape of the utility function~\cite{BG92, kelly1998rate}.
The parameter $\phi_{n,l}$ may present the traffic load for each NSI $n$ in a certain area.
We use the class of functions for the sake of numerical evaluations in Section~\ref{sec:Num_res}.

For NSIs with delay sensitive applications, we can account for the end-to-end delay in the utility function. 
In particular, the end-to-end delay for NSI $n$ which forwards an \emph{admissible}\footnote{The admitted traffic volume could be a fraction of offered load in each region. In Section~\ref{sec:extension:BC}, we discuss a potential solution to adjust the admitted traffic volume.} volume of $\phi_{n,l}$ unit of traffic can be estimated by~\cite{ETEDelay5G}
\be
D_n(x_{n,l}):= \frac{L}{x_{n,l}-\phi_{n,l}}+\frac{h_{n,l}L}{x_{n,l}},\text{ for } x_{n,l}> \phi_{n,l},
\ee
where $L$ is the average packet length (in bits) and $h_{n,l}$ is (proportional to) the total number of steps that
each packet is processed in the network\footnote{By definition, $D_n(x_{n,l}):=-\infty$ for $x_{n,l}\le\phi_{n,l}$.}.
The net revenue for NSI $n$ then is given by
\be
R_n(x_{n,l}) := U_n(x_{n,l}) - \beta_{n,l}D_n(x_{n,l}),\label{eq_revenue}
\ee
where the parameter $\beta_{n,l}$ in \eqref{eq_revenue} relates the delay to loss in profit.
For delay sensitive slices, we can consider the net revenue instead of the utility function in \eqref{global_obj}.
The parameter $\beta_{n,l}$ then could be properly adjusted to keep the end-to-end delay less than a desired threshold.
The proposed solution method in the following can be viewed as a variant of
the market equilibrium approach which was originally presented in \cite{kelly1998rate} and here is extended to a multi-resource multi-domain network.

\subsection{End-to-end Network Slice Management}\label{sec:main:NSP}
For each NSI we assume an end-to-end slice manager which monitors
the end-to-end network slice performance, and decides on bidding for the resources at different domains of the network.
It is assumed that the network slice manager has (a-priori) found the demand vector for each service function chain of an NSI,
by getting feedbacks and monitoring the resource utilization of the VMs implementing the function chains.
Particularly, assume that $\hat{\bf d}_{n,i} = [\hat{d}_{n,i,r}]$ is an (arbitrary) initial estimate of the demand vector for a service function at node $i$. Given that the resources are initially allocated (to the service function chain) proportional to $\hat{\bf d}_{n,i}$, the resource utilization of the VM, which implements it, is described in terms of $\hat{\bf d}_{n,i}$ and the true value of the demand vector~\cite{khamse2017per},
\be
u_{n,i,r}:=\frac{d_{n,i,r}}{\hat{d}_{n,i,r}}\min_{r'}\{\frac{\hat{d}_{n,i,r'}}{{d}_{n,i,r'}}\}.\label{utilization_Res}
\ee
So, given the resource utilization and an (\emph{arbitrary}) initial estimate of the demand vector, the true value of the demand vector can be found using \eqref{utilization_Res}. In the following, it is assumed that the true value of the demand vector is known to the slice manager.
Based on the demand vector, the slice manager knows the amount of resources that are required for forwarding/processing a certain volume of traffic at each domain of the network.
Moreover, it is assumed that the utility function (as a function of NSI traffic volume) is known to the network slice manager.
Each network slice manager then needs to maximize its net \emph{payoff function} (i.e., the gained utility minus the total payment),
\be
\Pi_n({\bf x}_n; {\bf \mu}) = \sum_lU_{n,l}(x_{n,l}) - \sum_{i,r}\sum_{p:i\in p}x_n^p d^p_{n,i,r}\mu_{i,r},\label{eq_payoff}
\ee
where $\mu_{i,r}$ is the price to book one unit of resource $r$ at node $i$.
The reader may note that $x_{n,l}$ is described in terms of path traffic volumes, $x_n^p$ (see \eqref{x_area}).
The network slice manager then strives to find an allocation ${\bf x}_n:=\{x_n^p\mid p\in\mathcal{P}_l,l=1,2,...,A\}$ which solves the following problem.

\vspace{+2mm}
\noindent{\bf Problem 2:  Network slice optimization problem}
\be
&&\max_{{\bf x}_n}\Pi_n({\bf x}_n; {\bf \mu})\qquad\qquad\\
&&\text{Subject to: }x_n^p\ge 0.
\ee
Given a solution to this problem, the network slice manager decides how much capacity should be provisioned for each NSI $n$ over each path,
which subsequently specifies the amount of resources which should be allocated from each node to NSI $n$ (see \eqref{a_server}).
The bid/payment that is made by NSI $n$ for resource $r$ of node $i$ is given by
\be
w_{n,i,r}:= \mu_{i,r}\sum_{p:i\in p}x_n^p d^p_{n,i,r}.\label{eq_bids}
\ee
\begin{note}
The payment matrix for each NSI $n$, ${\bf W}_n:=[w_{n,i,r}]$, is determined based on path traffic volumes/capacities.
That is, ${\bf W}_n={\bf W}_n({\bf x}_n)$.
\end{note}

\subsection{Resource Management for Nodes}\label{sec:main:SNSP}
Each service provider node, which implements a \emph{subnetwork functionality} (i.e., NSSI) for NSI $n$,
receives some bids $w_{n,i,r}>0$ for different resources.
It is assumed that each node does not price discriminate among different NSIs.
That is, each node $i$ chooses a certain price, $\mu_{i,r}$, for unit of each resource $r$,
and subsequently allocates an amount of $a_{n,i,r}=w_{n,i,r}/\mu_{i,r}$ of resource $r$ to each NSI $n$.
It is assumed that node $i$ is incurred an OPEX of $q_{i,r}$ for providing one unit of resource $r$.
This may include electrical energy costs, and the cost to transport traffic over the network of an internet service provider.
To ensure that OPEX are covered by NSI payments, the price for one unit of resource $r$ is chosen to be\footnote{We consider a competitive market in the presence of a plurality of service provider nodes. So it is assumed that each subnet service provider node applies the true value of its OPEX to set the resource prices.}  
\be
\mu_{i,r} = \max\left\{q_{i,r}, \frac{\sum_nw_{n,i,r}({\bf x})}{C_{i,r}}\right\},\label{res_price}
\ee
so that resource $r$ is \emph{cleared} (that is $\sum_na_{n,i,r}=C_{i,r}$) only when ${\sum_nw_{n,i,r}}/{C_{i,r}}\ge q_{i,r}$, 
In general,
\be
\sum_na_{n,i,r}=\sum_n\frac{w_{n,i,r}}{\mu_{i,r}}=:\eta_{i,r}C_{i,r},\\
\text{where }\eta_{i,r}:= \min\left\{\frac{\sum_nw_{n,i,r}}{C_{i,r}q_{i,r}},1\right\},
\ee
so that a fraction of $\eta_{i,r}<1$ of resource $r$ is allocated to different NSIs when ${\sum_nw_{n,i,r}}/{C_{i,r}}<q_{i,r}$.
 We show that employing such a simple pricing scheme in conjunction with the end-to-end NSI resource management mechanism of Section~\ref{sec:main:NSP} results in minimizing the operational costs in the whole network, while maximizing social welfare among network slice tenants.

\subsection{Distributed Online Mechanism}
In this section we devise a distributed online mechanism which implements
an auction game among the network slice mangers, on one hand, and a set of price-taking infrastructure/subnet service providers, on the other hand.
In the proposed mechanism, the network slice managers (each implementing one NSI) may iteratively update their bids for various resources of different nodes (see \eqref{eq_bids}). Given the payments by network slice managers,
each node (i.e., AP or data center) decides on the resource prices (see \eqref{res_price}) and
declares\footnote{Alternatively, we may assume that nodes declare the allocated resources to each NSI,
so that slice managers indirectly infer the resource prices ($\mu_{i,r}=w_{n,i,r}/a_{n,i,r}$).}
the resource prices to all network slice managers.
Each network slice manager then may find its desired (i.e., optimal) path traffic volumes,
and accordingly updates its payments, so as to improve its achievable payoff.
Given the payments from all NSIs, the resource prices are updated in a way that the capacity constraints are met for all resources.
This procedure continues until no network slice is willing to update its payments.
At this point (which is so-called a \emph{Nash equilibrium}) a stable price is established for each resource.
Since the resource prices are determined based on the payments from all NSIs, a tradeoff (which results in maximizing social welfare, as shown in Theorem~\ref{th_nash_opt}) is established between the allocated resources to different NSIs.

In the described auction game, the bidders are network slice managers, while the suppliers are different nodes,
each offering $M$ types of (divisible) resources according to a certain pricing scheme (as in \eqref{res_price}).
Hence, the resource prices are function of actions (i.e., allocations) taken by network slice managers (see \eqref{res_price} and Note~1).

\begin{definition}\label{def_NE}
The game is said to be in a \emph{Nash Equilibrium (NE)} if there exists an allocation ${\bf x}^*$ and a set of 
resource prices ${\bf\mu}^*=\mu({\bf x}^*)$, in compliance with \eqref{res_price}, so that no NSI gains payoff by \emph{unilateral deviation} from its current allocation,
\be
\Pi_n({\bf x}^*_n; {\bf \mu}^*)\ge \Pi_n({\bf x}_n; {\bf \mu}^*), \text{ for all feasible }{\bf x}_n, \forall n.\label{NE_condition}
\ee
\end{definition}

It is worth noting that the payoff of each NSI depends on its own \emph{action} (i.e., ${\bf x}_n$) as well as \emph{the current resource prices}.
The inequality in \eqref{NE_condition} implies that ${\bf x}^*_n$ is an optimal solution to Problem~2 (i.e., the local payoff optimization for NSI $n$) in conjunction with the equilibrium resource prices. Given the concavity of the payoff functions, an allocation
${\bf x}_n:=\{x_n^p\mid p\in\mathcal{P}_l,l=1,2,...,A\}$ is an optimal solution to Problem~2 for NSI $n$ if and only if for every path $p\in\mathcal{P}_l$:
\be\nonumber
\frac{\partial\Pi({\bf x}_n; {\bf \mu})}{\partial x_n^p}=\qquad\quad\qquad\qquad\qquad\qquad\quad\\
 U'_{n,l}(x_{n,l}) - \sum_{i\in p}\sum_rd^p_{n,i,r}\mu_{i,r}
\begin{cases}
= 0    & \text{if }x_{n}^p>0,\\
\le 0  & \text{otherwise}.
\end{cases}\label{slice_opt_cond}
\ee
The condition in \eqref{slice_opt_cond} implies that $x_n^p>0,~p\in\mathcal{P}_l$, only when
\be
\sum_{i\in p}\sum_rd^p_{n,i,r}\mu_{i,r} = \min_{p'\in\mathcal{P}_l} \sum_{i\in p'}\sum_rd^{p'}_{n,i,r}\mu_{i,r}.\label{path_optimality}
\ee
The left hand side in \eqref{path_optimality} gives the cost for transmitting one unit of NSI $n$'s traffic over path $p$.
It means that at the optimal solution to Problem~2, the traffic for each NSI is forwarded over
the least expensive path(s) in each area. Moreover, it follows from \eqref{slice_opt_cond} that 
\be
{x_{n,l}^{*}}:={U'_{n,l}}^{(-1)}\left(\min_{p\in\mathcal{P}_l} \sum_{i\in p}\sum_rd^{p}_{n,i,r}\mu_{i,r}\right),\label{optimal_volume}
\ee
where ${U'_{n,l}}^{(-1)}(\cdot)$ is the \emph{inverse} of ${U'_{n,l}}(\cdot)$.
The function ${U'_{n,l}}(\cdot)$ is assumed to be \emph{invertible}
owing to concavity of $U_{n,l}(\cdot)$ over its \emph{feasible region}.
Although ${x_{n,l}^{*}}$ is uniquely specified for each NSI in each area,
yet there might be several possible allocations in case that the minimum transmission cost is attained over multiple paths.
In particular, let $\mathcal{P}^*_{n,l}\subseteq\mathcal{P}_l$ denote the set of paths which result in the minimum transmission cost
for NSI $n$ in area $l$ (c.f. \eqref{path_optimality}).
A possible allocation is to uniformly distribute $x_{n,l}^{*}$ across the least expensive paths:
\be
x_n^{p*}=\begin{cases}
x_{n,l}^{*}/|\mathcal{P}^*_{n,l}|   &  \text{if }p\in\mathcal{P}^*_{n,l},\label{optimal_allocation}\\
0   &  \text{otherwise}.
\end{cases}
\ee

The proposed Distributed Resource Provisioning (DRP) mechanism is summarized in Table~\ref{table1}.
Beginning with some initial resource prices (e.g., $\mu_{i,r}=q_{i,r}$),
the slice manager for each NSI may find an optimal allocation according to
\eqref{optimal_volume} and \eqref{optimal_allocation}.
However, to prevent oscillations, especially when traffic is to be distributed over multiple paths,
${\bf x}_n$ is (gradually) updated according to \eqref{update_rule}.
The slice manager then finds the amount of resources which should be allocated from different nodes
(c.f. \eqref{a_server}), and accordingly bids for different resources of each node.
The offered payments are in turn used to update the resource prices at each node (see \eqref{res_price}).
Let $\{\hat{\mu}_{i,r}\}$ denote the updated resource prices which are taken by node $i$ in response to bids made by different NSIs in the current iteration. The \emph{actual volume} of traffic that can be processed for NSI $n$ over path $p$ is given by
\be
\hat{x}_n^p := x_n^p\min_{i\in p, r}\frac{\mu_{i,r}}{\hat{\mu}_{i,r}}.\label{actual_traffic}
\ee
The updated resource prices are subsequently used by network slice managers to repeat the same procedure in the next round.
The slice managers keep updating their decisions while $\|{\bf \hat{x}}_n-{\bf x}^*_n\|>\epsilon$ for some $\epsilon>0$.

\begin{table}
\vspace{+2mm}
\footnotesize
\caption{Distributed Resource Provisioning (DRP) Mechanism}
\label{table1}
\begin{tabular}{p{8.48cm}}
\hline\noalign{\smallskip}
The resource prices are initially set to $\mu_{i,r}=q_{i,r}, \forall i, r$.
\begin{itemize}
\item [I.] {\bf Given a set of resource prices, ${\bf \mu}$, each network slice manager}
  \begin{itemize}
    \item [-] Finds the optimal allocation ${\bf x}^*_n$ per the current resource prices according to \eqref{optimal_volume} and \eqref{optimal_allocation}.
    It then updates the current allocation according to
    \be
       {\bf x}_n \leftarrow (1-\eta_n){\bf x}_n + \eta_n{\bf x}^*_n,\label{update_rule}
    \ee
    where $\eta_n\in(0,1)$.
    \item [-] Finds the amount of resources which should be allocated from each node (see \eqref{a_server}), and then bids for different resources, accordingly (see \eqref{eq_bids}).
  \end{itemize}
\item [II.] {\bf Given updated payments by slice managers, each service provider node}
  \begin{itemize}
    \item [-] Updates the resource prices according to \eqref{res_price},
              and allocates the resources to different NSIs, accordingly.
    \item [-] Declares the resource prices to the network slice managers.
  \end{itemize}
\item [III.] {\bf Given updated resource prices, $\hat{{\bf\mu}}$, each network slice manager}
  \begin{itemize}
    \item [-] Finds the actual allocation, $\hat{{\bf x}}_n$, according to \eqref{actual_traffic}.
    \item [-] Updates the resource prices, ${\bf \mu}\leftarrow\hat{{\bf\mu}}$, as well as the allocation, ${{\bf x}}_n\leftarrow\hat{{\bf x}}_n$. Subroutine I then is repeated while $\|{\bf {x}}_n-{\bf x}^*_n\|>\epsilon$.
  \end{itemize}
\end{itemize}\\
\noalign{\smallskip}\hline
\end{tabular}
\end{table}

\subsection{Characterizing the Solution}\label{sec:main:ChSol}
In this section we show that the proposed distributed resource provisioning mechanism results in
optimizing the global system-wide objective of Problem~1, maximizing social welfare
among the network slice tenants while minimizing the OPEX for service providers.
Moreover, we show that Problem~1 has a unique optimal solution (in terms of $\{x_{n,l}\}$)
and so is the NE of the DRP mechanism.
We finally study the properties of the resulting allocation by characterizing the solution of Problem~1.

\begin{theorem}\label{th_nash_opt}
Assume a continuously differentiable and strictly concave utility function, $U_{n,l}(\cdot)$, for each NSI $n$.
An allocation ${\bf x}$ is an NE for DRP mechanism (in conjunction with some resource prices)
if and only if it is a solution to Problem~1.
\end{theorem}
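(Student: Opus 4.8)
The plan is to characterize both the NE and the optimum of Problem~1 by their first-order (KKT) conditions and to show they coincide under the identification $\mu_{i,r}=q_{i,r}+\lambda_{i,r}$, where $\lambda_{i,r}\ge 0$ is the multiplier attached to the capacity constraint \eqref{global_c1}. Since Problem~1 maximizes a concave objective over a polyhedron, the KKT conditions are necessary and sufficient for global optimality, and the affine constraints supply the constraint qualification guaranteeing existence of multipliers; this lets me argue entirely at the level of stationarity and complementary slackness. First I would form the Lagrangian $\mathcal{L}$ of Problem~1 with multipliers $\lambda_{i,r}\ge0$ and, using $x_{n,l}=\sum_{p\in\mathcal{P}_l}x_n^p$, compute
\[
\frac{\partial\mathcal{L}}{\partial x_n^p}= U'_{n,l}(x_{n,l})-\sum_{i\in p}\sum_r (q_{i,r}+\lambda_{i,r})\,d^p_{n,i,r}
\begin{cases}
=0 & \text{if } x_n^p>0,\\
\le 0 & \text{if } x_n^p=0,
\end{cases}
\]
together with primal feasibility \eqref{global_c1}--\eqref{global_c2} and complementary slackness $\lambda_{i,r}\big(\sum_n a_{n,i,r}-C_{i,r}\big)=0$. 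The key observation is that this stationarity condition is precisely the per-NSI optimality condition \eqref{slice_opt_cond} once we set $\mu_{i,r}=q_{i,r}+\lambda_{i,r}$, so the whole argument reduces to matching the multiplier $\lambda_{i,r}$ to the market price through the pricing rule \eqref{res_price}.

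For the direction ``NE $\Rightarrow$ optimal,'' I would start from an NE pair $({\bf x}^*,\mu^*)$; by Definition~\ref{def_NE} each NSI solves Problem~2 at the prices $\mu^*$, so \eqref{slice_opt_cond} holds. Setting $\lambda_{i,r}:=\mu^*_{i,r}-q_{i,r}$, nonnegativity is immediate because the maximum in \eqref{res_price} forces $\mu^*_{i,r}\ge q_{i,r}$, and stationarity follows by substitution. Primal feasibility is inherited from the node allocation rule $\sum_n a_{n,i,r}=\eta_{i,r}C_{i,r}\le C_{i,r}$. Complementary slackness is where the pricing rule does its work: if $\lambda_{i,r}>0$ then $\mu^*_{i,r}=\sum_n w_{n,i,r}/C_{i,r}$, and inserting $w_{n,i,r}=\mu^*_{i,r}a_{n,i,r}$ from \eqref{eq_bids} and cancelling $\mu^*_{i,r}$ yields the market-clearing identity $\sum_n a_{n,i,r}=C_{i,r}$; if $\lambda_{i,r}=0$ the slackness product vanishes trivially. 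Hence all KKT conditions hold and, by sufficiency, ${\bf x}^*$ solves Problem~1.

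For the converse, I would take an optimal ${\bf x}^*$ of Problem~1, invoke its KKT multipliers $\lambda_{i,r}\ge0$, and define candidate prices $\mu_{i,r}:=q_{i,r}+\lambda_{i,r}$. The only nontrivial step is checking these are admissible, i.e.\ consistent with \eqref{res_price}. When $\lambda_{i,r}=0$, feasibility gives $\sum_n w_{n,i,r}/C_{i,r}=q_{i,r}\sum_n a_{n,i,r}/C_{i,r}\le q_{i,r}$, so the maximum returns $q_{i,r}=\mu_{i,r}$; when $\lambda_{i,r}>0$, complementary slackness gives $\sum_n a_{n,i,r}=C_{i,r}$, hence $\sum_n w_{n,i,r}/C_{i,r}=\mu_{i,r}>q_{i,r}$ and again the maximum returns $\mu_{i,r}$. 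With the prices admissible, stationarity of Problem~1 is exactly \eqref{slice_opt_cond} at these prices, so each ${\bf x}^*_n$ maximizes $\Pi_n$ and \eqref{NE_condition} holds, establishing the NE.

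The main obstacle I anticipate is the self-referential character of the pricing rule: since $w_{n,i,r}=\mu_{i,r}a_{n,i,r}$ carries $\mu_{i,r}$ on both sides, \eqref{res_price} must be read as a fixed-point/consistency condition holding at equilibrium rather than as an explicit formula, and one must verify that the market-clearing identity $\sum_n a_{n,i,r}=C_{i,r}$ emerges cleanly precisely when the price strictly exceeds cost (and is otherwise vacuous). Pinning down this correspondence between complementary slackness and market clearing is the crux; once it is in place, the remainder is a term-by-term matching of the two sets of first-order conditions. Note that concavity (rather than strict concavity) of the objective in ${\bf x}$ is all that is needed here, which is why the equivalence is stated for allocations even though only the aggregate volumes $\{x_{n,l}\}$ are pinned down uniquely.
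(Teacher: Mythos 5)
Your proposal is correct and follows essentially the same route as the paper: both arguments match the KKT system of Problem~1 to the per-NSI optimality conditions of Problem~2 under the identification $\mu_{i,r}=q_{i,r}+\lambda_{i,r}$, and both extract complementary slackness from the consistency (fixed-point) requirement of the pricing rule \eqref{res_price} at equilibrium, which the paper phrases as $\hat{\mu}_{i,r}=\mu_{i,r}$ in \eqref{res_price_updated}. Your explicit handling of the self-referential bid/price relation and of the case split $\lambda_{i,r}=0$ versus $\lambda_{i,r}>0$ is a slightly more detailed rendering of exactly the step the paper performs.
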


The following is a direct conclusion of Theorem~\ref{th_nash_opt}.
\begin{remark}
There exists an NE for the DRP mechanism.
Moreover, the resulting allocation at the NE is unique in terms of $\{x^*_{n,l}\}$.
\end{remark}

The proof appears in the Appendix.
According to Theorem~\ref{th_nash_opt}, the DRP mechanism results in an allocation which
maximizes the overall utility for different NSIs
minus the summation of OPEX for infrastructure service providers (see \eqref{global_obj}).
Moreover, it follows from the proof of Theorem~\ref{th_nash_opt} that the resource prices at the NE are associated with
the dual variables $\lambda_{i,r}$ corresponding to the capacity constraints in \eqref{global_c1}.
In particular, $\mu_{i,r}=q_{i,r}+\lambda_{i,r}$, so that resource $r$ with a restricting capacity at node $i$
(i.e., with $\lambda_{i,r}>0$) results in $\mu_{i,r}>q_{i,r}$.

The reader may note that the convergence behavior of the proposed DRP mechanism
depends on the actual choice of utility functions, which can be different for various NSIs.
The fact that different NSIs can take different utility functions makes it difficult to derive analytical results on the rate of the convergence.
To derive such results, one needs to perform a statistical or worst-case analysis which is not straightforward and is out of the scope of this paper.
Nevertheless, Theorem~\ref{th_nash_opt} describes the NE, which is the
convergence point of the DRP mechanism, as the optimal solution to Problem~1
for any choice of continuously differentiable and strictly concave utility functions.
 At this point, we leave this rich topic for future work, and provide
numerical experiments to have some observations on the convergence
behavior of the DRP mechanism in Section~\ref{sec:Num_res}.

To further characterize the NE of the DRP mechanism (or equivalently the solution to Problem~1),
we formulate an equivalent network-wide optimization problem, which provably results in the same allocation.
Towards that, let $w_{n,l}$ denote the total payment made by NSI $n$
for the resources which process the originating traffic from area $l$,
\be
w_{n,l}:=\sum_{i,r}\mu_{i,r}\sum_{p\in\mathcal{P}_l:i\in p}x_n^pd^p_{n,i,r}.\label{paymentPerRegion}
\ee

\begin{theorem}\label{TH_prop_fair}
Let $\{w^*_{n,l}\}$ denote the set of payments made by different NSIs when the DRP mechanism is in an NE.
An allocation ${\bf x}^*$ serves as an NE for the DRP mechanism if and only if it is a solution to following problem.

\vspace{+2mm}
\noindent{\bf Problem~3: Subnetworks optimization problem}
\be
&& \max_{\bf x} \sum_{n,l} w^*_{n,l}\log(x_{n,l})  - \sum_{i,r}q_{i,r}\sum_n\sum_{p:i\in p}x_n^pd^p_{n,i,r}\qquad\label{global_SN_obj}\\
&& \text{s.t.} \sum_n\sum_{p:i\in p}x_n^pd^p_{n,i,r}\le C_{i,r}, ~~\forall i,r,\label{global_SN_c1}\\
&& \text{~~~~} x_n^p\ge0, ~~\forall n,p\label{global_SN_c2},
\ee
where $x_{n,l}$ is written in terms of $x_n^p$ according to \eqref{x_area}.
\end{theorem}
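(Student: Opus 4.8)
The plan is to characterize both sides of the equivalence through their Karush--Kuhn--Tucker (KKT) conditions and to show that these conditions coincide once the payments are frozen at their equilibrium values $\{w^*_{n,l}\}$. Problem~3 maximizes a concave objective (each term $w^*_{n,l}\log(x_{n,l})$ is concave in $x_{n,l}$, hence in ${\bf x}$, and the OPEX term is linear) over a polyhedral feasible set, so its KKT conditions are both necessary and sufficient for optimality, with no constraint qualification needed beyond affineness of the constraints. Writing $\lambda_{i,r}\ge0$ for the multipliers of the capacity constraints \eqref{global_SN_c1}, stationarity for Problem~3 reads, for each path $p\in\mathcal{P}_l$,
\[
\frac{w^*_{n,l}}{x_{n,l}}-\sum_{i\in p}\sum_r(q_{i,r}+\lambda_{i,r})d^p_{n,i,r}
\begin{cases}=0 & \text{if }x_n^p>0,\\ \le0 & \text{if }x_n^p=0.\end{cases}
\]
By Theorem~\ref{th_nash_opt}, an NE is precisely a solution to Problem~1, whose stationarity conditions are identical except that $w^*_{n,l}/x_{n,l}$ is replaced by $U'_{n,l}(x_{n,l})$, and (from the proof of Theorem~\ref{th_nash_opt}) the equilibrium prices obey $\mu_{i,r}=q_{i,r}+\lambda_{i,r}$.

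The bridge between the two systems is the identity $w^*_{n,l}=x^*_{n,l}\,U'_{n,l}(x^*_{n,l})$. To establish it I would regroup the payment definition \eqref{paymentPerRegion} by path as $w^*_{n,l}=\sum_{p\in\mathcal{P}_l}x_n^{p}\bigl(\sum_{i\in p}\sum_r\mu_{i,r}d^p_{n,i,r}\bigr)$ and invoke the path-optimality condition \eqref{path_optimality}: every path carrying traffic ($x_n^{p}>0$) has per-unit cost exactly $U'_{n,l}(x^*_{n,l})$, while paths with $x_n^{p}=0$ contribute nothing, so the bracket factors out and the remaining sum is $\sum_{p\in\mathcal{P}_l}x_n^{p}=x^*_{n,l}$. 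Hence $w^*_{n,l}/x^*_{n,l}=U'_{n,l}(x^*_{n,l})$, which is exactly the value that makes the Problem~3 stationarity agree with that of Problem~1 at the equilibrium point.

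For the forward direction it then suffices to check that the NE allocation ${\bf x}^*$, together with the multipliers $\lambda_{i,r}$ it already carries as a Problem~1 optimizer, satisfies every KKT condition of Problem~3: feasibility and complementary slackness are inherited verbatim since the constraint sets are identical, and stationarity holds by the identity just derived. Sufficiency of KKT for the concave Problem~3 then certifies that ${\bf x}^*$ solves Problem~3.

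The backward direction is where the real work lies, and I expect it to be the main obstacle. A solution ${\bf x}^{**}$ of Problem~3 satisfies $w^*_{n,l}/x_{n,l}^{**}=\sum_{i\in p}\sum_r(q_{i,r}+\lambda^{**}_{i,r})d^p_{n,i,r}$ on active paths, which matches Problem~1 only when $x_{n,l}^{**}=x^*_{n,l}$, since that is what turns $w^*_{n,l}/x_{n,l}^{**}$ back into $U'_{n,l}(x_{n,l}^{**})$. I would close this gap by proving that Problem~3 determines the per-area volumes uniquely: if two optimizers had distinct $\{x_{n,l}\}$, their midpoint would be feasible (the feasible set is convex), would leave the linear OPEX term at its average value, but by strict concavity of $\log$ in $x_{n,l}$ (using $w^*_{n,l}>0$ on active slices) would strictly raise the objective, contradicting optimality. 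Every Problem~3 solution therefore shares the same $\{x_{n,l}\}$; combined with the forward direction and the uniqueness of the equilibrium volumes noted after Theorem~\ref{th_nash_opt}, this forces $x_{n,l}^{**}=x^*_{n,l}$. With this equality the two stationarity systems coincide, so ${\bf x}^{**}$ solves the concave Problem~1 and is hence an NE by Theorem~\ref{th_nash_opt}. The delicate points remaining are slices with $w^*_{n,l}=0$ (whose $\log$ term degenerates and must be argued separately) and checking that the recovered multipliers $\lambda^{**}$ are consistent with admissible equilibrium prices $\mu_{i,r}=q_{i,r}+\lambda^{**}_{i,r}$.
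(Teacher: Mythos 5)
Your proposal is correct and follows the same skeleton as the paper's proof: write the KKT systems of Problems~1 and~3, observe that they differ only in the stationarity term ($U'_{n,l}(x_{n,l})$ versus $w^*_{n,l}/x_{n,l}$), and bridge them with the identity $w^*_{n,l}=x^*_{n,l}\,U'_{n,l}(x^*_{n,l})$, obtained by regrouping \eqref{paymentPerRegion} by path and using the equal-cost condition \eqref{path_optimality} on active paths --- this is exactly the paper's equation \eqref{Th2_F4} combined with \eqref{KKTP2_1}. Where you genuinely go beyond the paper is the backward direction: the paper dismisses it with ``in the same way, it can be observed,'' which is not symmetric as stated, because the Problem~3 stationarity only reproduces that of Problem~1 \emph{after} you know the candidate's per-area volumes equal the equilibrium ones. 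Your uniqueness argument --- averaging two Problem~3 optimizers leaves the linear OPEX term unchanged while strictly increasing $\sum_{n,l}w^*_{n,l}\log(x_{n,l})$ unless the $\{x_{n,l}\}$ coincide, and then combining with the forward direction to force $x_{n,l}^{**}=x^*_{n,l}$ --- is the missing lemma that makes the equivalence airtight, and your flags about slices with $w^*_{n,l}=0$ and about recovering admissible equilibrium prices $\mu_{i,r}=q_{i,r}+\lambda^{**}_{i,r}$ identify real loose ends the paper does not discuss (the former is vacuous for the utility class $U'_{n,l}(z)=(\phi_{n,l}/z)^{\alpha_n}$, for which $x^*_{n,l}>0$ always). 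In short, same route, but your version actually closes the ``only if'' half that the paper leaves implicit.
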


According to Theorem~\ref{TH_prop_fair}, the DRP mechanism results in an allocation which satisfies
weighted proportional fairness among different NSIs (wherein the weights are set to payments) while minimizing the OPEX~\cite{kelly1998rate, khamse2018cost}.
To further characterize the allocation resulting from the DRP mechanism,
let ${\bf a}^p_n=\{a^p_{n,i,r}|i\in p, \forall~r\}$ denote the allocated resources to NSI $n$
over path $p$. We denote by $T_n({\bf a}_{n,l})$ the volume of traffic which can be processed for NSI $n$
using the resources allocated to NSI $n$ in area $l$, ${\bf a}_{n,l}:=\sum_{p\in\mathcal{P}_l}{\bf a}^p_n$.

\begin{definition}
An allocation is said to satisfy envy-freeness if each NSI $n$ in each area
would not prefer the allocated resources to another NSI when adjusted according to their payments,
that is, $T_n({\bf a}_{n,l})\ge T_n(\frac{w^*_{n,l}}{w^*_{m,l}}{\bf a}_{m,l})$.
\end{definition}

The envy freeness property embodies the notion of \emph{fairness}~\cite{DRF}. The other property that we consider here is \emph{sharing-incentive},
which ensures that the proposed mechanism outperforms a so-called \emph{uniform allocation}.
To find a generic uniform allocation, let $w^{p*}_{n,i}$ denote the payment made by NSI $n$ to node $i$ for the service over path $p$.
Consider a \emph{uniform allocation} wherein a fraction $w^{p*}_{n,i}/\sum_{m,p':i\in p'}w^{p'*}_{m,i}$ of different resources at node $i$ is
dedicated to NSI $n$ for service over path $p$.

\begin{definition}
An allocation is said to satisfy \emph{sharing-incentive} if each NSI is provided with more traffic volume compared to the \emph{uniform allocation}.
\end{definition}

The sharing-incentive property is the key to ensure a worst-case performance guarantee for each NSI.
Satisfying this property also may incentify different carriers/operators to pool their resources together~\cite{DRF},
because each of them may forward more traffic volumes over the shared infrastructure (when orchestrated by the proposed mechanism) compared to the case that each of them gets \emph{an equal (weighted) share} of all the resources.
We show that the envy-freeness and sharing-incentive properties 
are established at the NE of the DRP mechanism.

\begin{theorem}\label{TH_properties}
The allocation at the NE of the DRP mechanism satisfies both \emph{envy-freeness} and \emph{sharing-incentive} properties.
\end{theorem}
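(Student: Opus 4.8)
The plan is to establish the two properties separately, in each case reducing the multi-resource, multi-path statement to a one-dimensional comparison of traffic volumes by exploiting the path-optimality condition \eqref{path_optimality} together with the equilibrium prices $\mu^*=\mu({\bf x}^*)$ from \eqref{res_price}. Throughout I write $c^p_{n,i}:=\sum_r d^p_{n,i,r}\mu^*_{i,r}$ for the per-unit cost incurred by NSI $n$ at node $i$ on path $p$, and $c^p_n:=\sum_{i\in p}c^p_{n,i}$ for the per-unit cost of the whole path; by \eqref{path_optimality} every path carrying positive traffic attains the common minimum $c^*_{n,l}:=\min_{p\in\mathcal{P}_l}c^p_n$, so that $w^*_{n,l}=\sum_p x^{p*}_n c^p_n=c^*_{n,l}\,x^*_{n,l}$. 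The single identity $x^*_{n,l}=w^*_{n,l}/c^*_{n,l}$ is what converts payments into volumes and drives both proofs.

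For \emph{envy-freeness}, first observe that the scaled bundle $\hat{\bf a}:=\frac{w^*_{n,l}}{w^*_{m,l}}{\bf a}_{m,l}$ costs exactly $w^*_{n,l}$ at the equilibrium prices, since $\sum_{i,r}\mu^*_{i,r}\hat a_{i,r}=\frac{w^*_{n,l}}{w^*_{m,l}}w^*_{m,l}$. Now let $\{y^p\}$ be any routing of NSI $n$'s traffic that is feasible inside $\hat{\bf a}$, that is, $\sum_{p:i\in p}y^p d^p_{n,i,r}\le\hat a_{i,r}$ for all $i,r$, and let $T:=\sum_p y^p$ be its volume. Weighting these feasibility constraints by $\mu^*_{i,r}$ and summing gives $\sum_p y^p c^p_n\le\sum_{i,r}\mu^*_{i,r}\hat a_{i,r}=w^*_{n,l}$, while $c^p_n\ge c^*_{n,l}$ for every path yields $\sum_p y^p c^p_n\ge c^*_{n,l}T$. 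Hence $T\le w^*_{n,l}/c^*_{n,l}=x^*_{n,l}=T_n({\bf a}_{n,l})$; maximizing over feasible routings gives $T_n(\hat{\bf a})\le T_n({\bf a}_{n,l})$, which is precisely the envy-freeness inequality. The punch line is that $n$ could have purchased the scaled rival bundle with its own budget but, by the path-optimality of the market solution, cannot do better than its own allocation.

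For \emph{sharing-incentive} I argue path-by-path. Under the uniform allocation NSI $n$ receives $\frac{w^{p*}_{n,i}}{W_i}C_{i,r}$ units of resource $r$ at each node $i\in p$, where $W_i:=\sum_{m,p':i\in p'}w^{p'*}_{m,i}$, so the volume it can push along $p$ is the bottleneck $\tilde y^p_n=\min_{i\in p}\min_r\frac{w^{p*}_{n,i}C_{i,r}}{W_i d^p_{n,i,r}}$. Substituting $w^{p*}_{n,i}=x^{p*}_n c^p_{n,i}$ factors this as $\tilde y^p_n=x^{p*}_n\,\min_{i\in p}\frac{c^p_{n,i}/W_i}{\max_r d^p_{n,i,r}/C_{i,r}}$, so it suffices to exhibit, for some node $i\in p$, the inequality $c^p_{n,i}/W_i\le\max_r d^p_{n,i,r}/C_{i,r}$. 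At a node whose resources are all cleared one has $W_i=\sum_r\mu^*_{i,r}C_{i,r}$, and then $c^p_{n,i}/W_i=\sum_r\theta_{i,r}\,(d^p_{n,i,r}/C_{i,r})$ with convex weights $\theta_{i,r}=\mu^*_{i,r}C_{i,r}/W_i$; a convex combination never exceeds its largest entry, giving the bound. Therefore $\tilde y^p_n\le x^{p*}_n$ on each such path, and summing over $p\in\mathcal{P}_l$ yields $\sum_p\tilde y^p_n\le x^*_{n,l}$, the sharing-incentive claim.

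The main obstacle is the clearing hypothesis in this last step: the convex-combination identity needs $W_i=\sum_r\mu^*_{i,r}C_{i,r}$, which holds only when every resource at node $i$ is fully booked. Since feasibility \eqref{global_c1} gives only $\sum_m a_{m,i,r}\le C_{i,r}$, at a node with slack one merely has $W_i\le\sum_r\mu^*_{i,r}C_{i,r}$ and the weighted-average bound can run the wrong way. I would close this gap by arguing that the bottleneck of $\tilde y^p_n$ always sits at a cleared resource---an uncongested resource is abundant and cannot bind---so that it suffices to apply the convex-combination estimate at a binding node of the path; formalizing ``the bottleneck is cleared'' (for instance via the complementary-slackness/price relation $\mu^*_{i,r}=q_{i,r}+\lambda_{i,r}$ recorded after Theorem~\ref{th_nash_opt}) is the one step that needs care. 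Envy-freeness, by contrast, invokes only the budget identity and \eqref{path_optimality} and is therefore unconditional.
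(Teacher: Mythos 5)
Your envy-freeness argument is correct, and it takes a genuinely different route from the paper's. You bound the traffic extractable from the scaled rival bundle by its cost at equilibrium prices ($\sum_p y^p c^p_n \le w^*_{n,l}$) and divide by the minimum path cost, using $w^*_{n,l}=c^*_{n,l}x^*_{n,l}$. The paper instead argues by contradiction from the componentwise dominance $x_{n,l}d^p_{n,i,r}/w^*_{n,l} < x_{m,l}d^p_{m,i,r}/w^*_{m,l}$ and multiplies the stationarity relations \eqref{Pf3_4}--\eqref{Pf3_5} to reach $1\le\cdots<\cdots=1$. Your version is somewhat more robust, since it bounds $T_n$ of the whole scaled bundle under arbitrary re-routing rather than relying on a Leontief-style ``all components must dominate'' notion of preference; both rest on the same path-optimality and pricing facts.

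The sharing-incentive half has a real gap, and your proposed repair does not close it. You define the uniform benchmark with the nominal capacities $C_{i,r}$, whereas the paper's proof (see \eqref{Pf3_8}) divides up only the resources actually allocated at the NE, namely $\eta_{i,r}C_{i,r}$ with $\eta_{i,r}=\min\{\sum_n w_{n,i,r}/(C_{i,r}q_{i,r}),1\}$. With nominal capacities the claim is simply false in a low-loading regime: take a single NSI on a single path with one abundant resource; the NE provisions the finite volume ${U'_{n,l}}^{(-1)}\bigl(\sum_r q_{i,r}d_{n,i,r}\bigr)$, while ``a fraction $w^{p*}_{n,i}/W_i=1$ of $C_{i,r}$'' would permit far more. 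In that same regime no resource is cleared, so your intended fix (``the bottleneck always sits at a cleared resource'') has nothing to attach to. The paper avoids the issue entirely because the identity $W^*_i=\sum_r\mu^*_{i,r}\eta_{i,r}C_{i,r}$ holds at \emph{every} node, cleared or not (each node allocates $a_{n,i,r}=w_{n,i,r}/\mu_{i,r}$), so the estimate $\gamma^p_{n,i}\sum_r\mu^*_{i,r}d^p_{n,i,r}\le W^*_i$ goes through unconditionally. If you replace $C_{i,r}$ by $\eta_{i,r}C_{i,r}$ in your benchmark, your path-by-path reduction becomes essentially the paper's argument and is complete.
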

The proof appears in the Appendix.

\section{Extensions}\label{sec:extension}
We may consider several possible extensions/applications for the original resource provisioning mechanism presented herein.
For instance, one may find the proposed mechanism particularly useful for network slicing in multi-layer networks
such as the integrated vertical HetNets~\cite{alzenad2019coverage} (with aerial BSs~\cite{8641421},  LEO satellites, etc.),
owing to agility and distributed nature of the solution.
In the following we briefly describe some important extensions to the original mechanism of Section~\ref{sec:main}.

\subsection{Exploiting the Capabilities at the Mobile Edge}
Our proposed mechanism can be easily extended to address the case that the service function chain at each domain comprises
a number of sub-chains, with the possibility that some backhaul and/or core network sub-chains
 can be implemented at a domain closer to the mobile edge.

In particular, let $\{\mathcal{F}_n^1, \mathcal{F}_n^2, \mathcal{F}_n^3\}$ denote the per domain network function chain for NSI $n$.
 It is assumed that for the last two domains $\mathcal{F}_n^s=\{\tilde{F}_n^s, \hat{F}_n^s\}$, $s=2,3$,
where $\hat{F}_n^s$ should be placed at a domain $s$ node, while $\tilde{F}^s_n$ can be flexibly placed at a node either in domain $s$ or $s-1$.
Accordingly, we may consider separate demand vectors, $\hat{\bf d}^p_{n,i}$ and $\tilde{\bf d}^p_{n,i}$, for each sub-chain of NSI $n$,
where ${\bf d}^p_{n,i}=\hat{\bf d}^p_{n,i}+\tilde{\bf d}^p_{n,i}$.
 To extend the DRP mechanism we may redefine paths as sequence of nodes, $p=\{l, i, j, k, g, h\}$,
 which host different sub-chains of an NSI. It reduces to the original formulation (with a per-domain service function chain), when $i_2=i_3$ and $i_4=i_5$.
The DRP mechanism then is implemented as before,
except that the network slice manager now 
bids separately for individual sub-chains when they are located at different nodes.  
The DRP mechanism may exploit this flexibility to efficiently utilize the capabilities at the mobile edge,
so as to improve the performance for NSIs with stringent QoS/delay requirements.

\subsection{Budget-Constrained Tenants}\label{sec:extension:BC}
With the resource pricing strategy described in Section~\ref{sec:main:SNSP},
the resource prices might be chosen well above the operational expenditures (i.e., $\mu_{i,r}>q_{i,r}$)
when so many NSIs contend for the resources of the same node.
If the allocated traffic volume to an NSI is less than its desired optimal traffic volume,
the corresponding network slice manager may intend to make a larger bid/payment, which in turn may increase the resource prices.
This procedure may unboundedly increase the resource prices as well as the required payments from different NSIs.
In practice, however, each network slice tenant may have a limited budget, so that the bids may not go beyond a certain limit.
The key to account for limited budgets is to exploit an admission control policy which limits the \emph{admitted traffic volume}
in each area, so that the required payment remains in a feasible region.
Let $B_{n,l}$ denote the budget for NSI $n$ in area $l$, and $U'_{n,l}(x_{n,l})=(\phi_{n,l}/x_{n,l})^{\alpha_n}$,
where $\phi_{n,l}$ represents the admitted traffic volume.
In case that $w_{n,l}>B_{n,l}$, one may update $\phi_{n,l}\leftarrow\zeta\phi_{n,l}$, $\zeta<1$,
so as to make sure that the payments are less than or equal to the budget.

\subsection{Multi-resource Fair Allocation}\label{sec::ext::MRF}
In Section~\ref{sec:main:ChSol} we showed that the proposed distributed resource provisioning mechanism provides
\emph{weighted proportional fairness} among different NSIs, wherein the weight for each NSI $n$ in area $l$ is set to its payment, $w_{n,l}$ (see Problem~3).
The proportional fairness objective in Problem~3 is shown to satisfy some highly desirable properties,
such as \emph{envy-freeness} and \emph{sharing incentive}~\cite{khamse2018}.
It should be noted that the formulation in Problem~3 is applicable for a system where all the resources of all nodes
are allocated by a central controller. With a centralized implementation, however,
there are other properties, such as \emph{strategy proofness} which are desirable to be satisfied~\cite{DRF}.
In particular, an allocation mechanism is said to satisfy \emph{strategy proofness}
if each NSI may not be allocated more traffic volumes when lying about its resource demands to the centralized controller.

Dominant resource fairness (DRF) is the first multi-resource fair allocation mechanism which satisfies strategy-proofness
in addition to the above-mentioned properties~\cite{DRF}, when allocating multiple types of resources from a single \emph{server}.
Of all the resources requested by a network function from one node (for each unit of traffic), its dominant resource is the one with the highest demand when demands are expressed as fractions of the overall resource capacities.
Using DRF, each network function receives a fair share of its respective dominant resource~\cite{DRF}.
The studies in \cite{khamse2018cost, khamse2018, DRFH15, Ashari19a} investigate the problem of multi-resource fair allocation in an environment
of heterogeneous and geographically distributed servers.
Such studies, however, address \emph{single-hop processing} of the traffic in a network of cloud computing servers,
and may not be directly applicable to end-to-end resource provisioning in a multi-domain mobile network environment.

The study in \cite{halabian2019distributed} strives to extend DRF to a multi-domain mobile network environment,
wherein the traffic for each NSI is forwarded over a certain path towards the core network.
Towards that, it identifies an \emph{end-to-end dominant resource} for each NSI, which is specified based on its end-to-end demand vector.
Then, it strives to allocate each NSI a fair share of its respective dominant resource.
Particularly, a dominant share for NSI $n$ over path\footnote{Here, we may drop the index $p$ from the demand vectors,
since each NSI is presumably forwarded over a single path.} $p$ is defined as \cite{halabian2019distributed}
\be
z_n := x_n \max_{i\in p}~\max_r\frac{d_{n,i,r}}{C_{i,r}}.
\ee
Then the NSI traffic volumes, $\{x_n\}$, are determined so as to maximize the minimum dominant share across different NSIs.
In case that different NSIs make different payments, $w_n$, one can maximize the minimum \emph{weighted dominant share}, ${z_n/w_n}$, across different NSIs. We refer to such an extension to DRF as \emph{multi-domain DRF}.
However, as we show in the following (and also in Section~\ref{sec:Num_res}), the multi-domain DRF mechanism may not satisfy the sharing-incentive property, and also may not result in a \emph{fair} allocation.
Intuitively, the main problem with this mechanism is that it may identify a resource at a lightly loaded node
(with a few NSIs passing through) as the \emph{end-to-end dominant resource} for some NSIs.
Such a resource, however, may not serve as a \emph{bottleneck} over an end-to-end routing path.
In this case, a smaller share of the actual bottleneck resource is allocated to such NSIs,
which in turn may violate the sharing incentive property.

For instance, consider the example of Fig.~\ref{fig:counterExmpl},
which shows a simple network connecting the users from two APs to a CRAN unit.
The resource capacity vector for each node, ${\bf C}_i, i = 1, 2, 3$, as well as the demand vector for each NSI $n$ passing through node $i$, ${\bf d}_{n,i}$, are shown in the figure. In this example, the communication bandwidth of APs is identified as the end-to-end dominant resource for each NSI. According to the multi-domain DRF mechanism \cite{halabian2019distributed}, one may equalize $z_n$ for the three NSIs, which requires $x_2=x_3=x_1/2$. It can be observed that NSI traffic volumes can be increased up $x_2=x_3=x_1/2=4$, before the CPU turns to a bottleneck (i.e., fully booked) at the CRAN. Evidently, the second and third NSIs do not obtain a \emph{fair share} of the bottleneck resource (i.e., CPU of the CRAN) under the multi-domain DRF allocation in this example. Moreover, the provisioned capacity to each of them is less than that achievable under a uniform allocation (where $x_n=20/3$, for $n=1, 2, 3$).

\begin{figure}[h!]
\centering
\includegraphics[width = 0.96\columnwidth]{./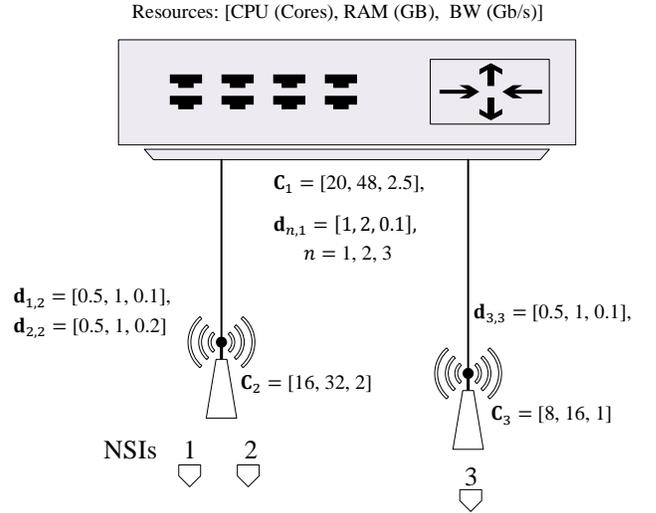}
\caption{\footnotesize A sample network of 2 APs, and 1 CRAN unit, providing resources to three NSIs.}
\label{fig:counterExmpl}
\end{figure}

To address this issue, we propose an extension to DRF which is inspired by the
 per-server dominant share fair (PS-DSF) allocation mechanism presented in \cite{khamse2018, khamse2017per}
 for fair resource allocation from a set of multi-resource heterogeneous servers.
 Particularly, by the PS-DSF mechanism a \emph{per-server dominant resource} is identified for each network function with respect to each server.
 Then, each server strives to maximize the minimum \emph{per-server dominant share} among different network functions~\cite{khamse2017per}.
 For fair resource provisioning in a multi-domain mobile network environment,
 we may identify a \emph{per-domain dominant resource} for each NSI with respect to each node over an end-to-end routing path.
  Then starting from the last domain (i.e., the core network), at each node
 one may allocate a fair-share of the per-domain dominant resource to all NSIs  which are passing through the same node.
 The same procedure can be implemented at preceding domains, except that the allocated traffic volume to each NSI
 is limited by the one at subsequent domains. We refer to this mechanism as \emph{Per-Domain DRF}\footnote{It can be shown that the per-domain DRF mechanism inherits all of the properties which are satisfied by PS-DSD mechanism~\cite{khamse2018}.
The details of such analysis, however, are out of the scope of this paper.}.
 For example, in Fig.~\ref{fig:counterExmpl}, CPU (bandwidth) is identified as the dominant resource for each NSI at the CRAN unit (each of the APs).
 Starting from the CRAN, each of the NSIs receives a fair share of the per-domain dominant resource (i.e., CPU),
  which results in $x_n=20/3$, $n=1,2,3$. Here the traffic volume for NSIs at each of the APs is limited by that at the CRAN (i.e., the end-to-end bottleneck), so the end-to-end allocation for each NSI is given by $x_n=20/3$, $n=1,2,3$.

  In this particular example, the proportional fairness metric of Problem~3 results in the same allocation (as the per-domain DRF mechanism)
   when setting OPEX to zero and assuming the same weights (i.e., payments) for all NSIs.
  By implementing the DRP mechanism, however,
  each of the NSIs would make a (different) payment which maximizes its payoff function.
 In Section~\ref{sec:Num_res} we compare the performance of the proposed DRP mechanism
  (or equivalently the weighted proportional fairness metric of Problem~3)
 against the \emph{multi-domain DRF}, and
 \emph{per-domain DRF}, respectively, while setting the weight for each NSI as its payment under the DRP mechanism.

\section{Performance Evaluation}\label{sec:Num_res}

\subsection{Simulation Setup}
In this section we evaluate the performance of the DRP mechanism by implementing the proposed algorithm in MATLAB,
and comparing its performance against some existing solutions in the literature.
For the sake of numerical evaluations, we consider a multi-domain network, as shown in Fig.~\ref{fig:systemModel},
comprising ten distributed RAN units (e.g., APs), two CRAN units, and a core network data center.
In the RAN segment, we consider five areas, wherein 2 different APs are assumed in each area.
Particularly, it is assumed that APs are of \emph{2 different types}, each providing four types of resources
that are CPU, RAM, memory bandwidth and communication bandwidth.
The resource capacities for the two types of APs, and for the CRAN and CN data centers are given in Table~\ref{table2}.
The resource capacities for each CRAN data center (and CN data center, respectively) are equivalent to 3 instances of Amazon EC2 C5.4
(one instance of Amazon EC2 C5n.18).
The operational costs are taken according to a uniform distribution in the range of $[0.5,1]\cent $ for unit of RAM and
memory bandwidth, in the range of $[1, 2]\cent $ for 1 core of CPU, and in the range of $[1, 10]\cent$ for 1Gb/s of communication bandwidth
 allocated over unit of time.
It is assumed that APs in areas 2, 3, and 4 have connections to both CRAN units,
while the APs in area 1 (area 5, respectively) are connected only to the first CRAN (second CRAN) unit.
The communication bandwidth at each AP
represents the maximum achievable data rate when using the best modulation and coding scheme (i.e, the \emph{nominal capacity}).
Depending on the performance requirements and or (mid-term) feedbacks from the users in a certain area, however, some NSIs may require lower order modulation and coding schemes, or particular beam-forming and or MIMO transmission schemes which possibly result in a lower (average) achievable data rate. Such complexities can be flexibly reflected in the demand vector for different NSIs.
The demand vector for each NSI is assumed to be fixed in one instantiation of the game. However, it may vary in different instantiations to capture (mid-term) fluctuations in the RAN environment.

Since network slicing is rather a new concept, there are no real-world traces publicly available to use.
So, as in \cite{halabian2019distributed}, the demand vector (for 100 Mb/s of traffic) is generated independently for each NSI $n$, with the values taken according to \emph{independent uniform distributions} in the range of $\text{coeff}_1\times[0.4, 0.8]$ cores for CPU, in the range of $[1, 2]$ GB for RAM, in the range of $\text{coeff}_2\times[50, 100]$ Mb/s for communication bandwidth, and in the range of $[10, 20]$ Mb/s for the memory bandwidth. Independency of the demand vector for various NSIs and across different elements ensures heterogeneity in the generated data-set which is the most important requirement to synthesize a multi-resource data-set~\cite{DRF, khamse2018}.
The parameter $\text{coeff}_1$ is taken to be 2 for demands at the RAN (representing more intensive processing at the RAN), and 1 otherwise.
The parameter $\text{coeff}_2$ is taken randomly from the set $\{1,2,4,6\}$ at the RAN for each NSI, and is chosen to be 2 at other network segments.
The utility function for each NSI $n$, $U_{n,l}(z)$, is chosen from the class of concave utility functions with the derivative (i.e., \emph{marginal benefit}) described as $U'_{n,l}(z)=(\phi_{n,l}/z)^{\alpha_n}$ ~\cite{BG92, kelly1998rate}.
The parameter $\alpha_n$ determines the shape of the utility function (or the \emph{marginal benefit}) for each NSI.
The parameter $\phi_{n,l}$ may present the traffic load for each NSI $n$.
Unless otherwise stated, the parameter $\alpha_n$ for each NSI $n$ is chosen according to a uniform distribution in the range of $[1,2]$.
We study the performance of the proposed mechanism under different loading conditions (as described in Section~\ref{sec:Num_res:Res}).
The budget for each NSI is chosen to be $\$100$ in each area.

\begin{table}
\vspace{+2mm}
\footnotesize
\caption{Data-center/AP resource capacities}
\label{table2}
\begin{tabular}{c | c c c c}
\hline\hline
Node Type & CPU & RAM & Memory BW & Comm BW\\
~ & ($\#$ cores) & (GBytes) & (Gb/s) & (Gb/s) \\
\hline\hline
AP Type 1 & 16 & 32 & 10 & 1 \\\hline
AP Type 2 & 8 & 16 & 5 & 1 \\\hline
CRAN & 48 & 384 & 4$\times$10 & 7 \\\hline
CN & 96 & 384 & 2$\times$50 & 14 \\\hline
\end{tabular}
\end{table}

\subsection{Simulation Results}\label{sec:Num_res:Res}
First we study the convergence behavior of the proposed {distributed resource provisioning} mechanism.
Next, we compare the performance of the DRP mechanism (in terms of the provisioned capacity, resource utilization
and other performance metrics such as OPEX) against
some heuristic and/or recently developed work in the literature, which show the enhanced performance of the proposed solution.

Fig.~\ref{fig_num_convg} shows the number of iterations that are required for the DRP mechanism
to converge to an $\epsilon$-boundary of the optimal solution to Problem~1 under different loading conditions.
In this experiment, the parameter $\alpha_n$ for each NSI is taken according to a uniform distribution in the range of $[1,2]$.
Unless otherwise stated, we consider a fixed number of $N=50$ NSIs.
The parameter $\phi_{n,l}=\phi$ (representing the traffic demand) for different NSIs is chosen under the high-loading condition such that a fully booked resource (i.e., \emph{bottleneck}) exists on each routing path. The traffic demand is reduced to half (and one fourth, respectively) for mid-loading (low-loading) condition. We also consider another high-loading condition where we assume $2N=100$ number of NSIs but $\phi_n$ is halved.
It is observed that a precision of $10^{-3}$ is achieved in the worst case (i.e., under a high loading condition)
within only a few hundreds of iterations, which only takes a few milliseconds when implemented in such a cluster with tens of servers.
Our observations indicate that the convergence rate mainly depends on the overall traffic demand.
Hence, the convergence rate for a high loading regime remains the same when the number of NSIs is doubled but $\phi_n$ is halved (see Fig.~\ref{fig_num_convg}).

\begin{figure}[h!]
\centering
\includegraphics[width = 0.96\columnwidth]{./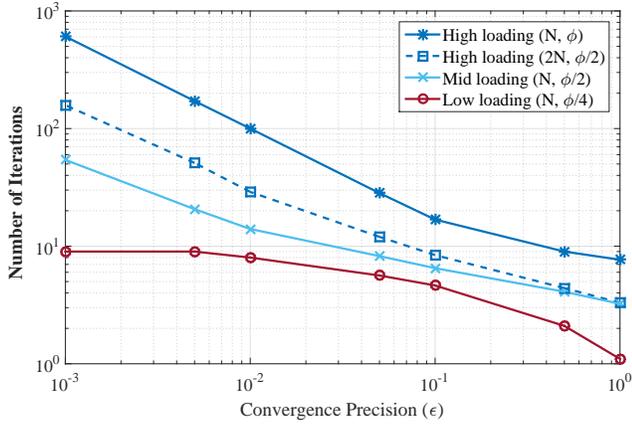}
\caption{The required number of iterations for convergence of the DRP mechanism to an $\epsilon$-boundary
of the optimal solution to Problem~1 under different loading conditions. The parameter $\alpha_n$ for each NSI is taken according to a uniform distribution in the range of $[1,2]$.}
\label{fig_num_convg}
\end{figure}

It is worth noting that the proposed mechanism converges within only a few (less than 10) iterations under a
low-loading condition. Intuitively, there is less contention for different resources under lower loading conditions,
resulting in partially booked resources with a fixed pricing at almost all nodes.
With less variations in pricing, the proposed auction game converges more rapidly to the optimal solution under lower loading conditions.

In another experiment reported in Fig.~\ref{fig_num_convg2}, we study the convergence performance of the DRP mechanism under high loading conditions, while the parameter $\alpha_n$ for each NSI is taken in two different ranges. It is observed that the convergence facilitates when $\alpha_n$ takes on values in a tighter range. Intuitively, when $\alpha_n$ takes on values in the range of $[1,1.5]$ (compared to taking values in $[1,2]$),
the shape of utility function and also the marginal benefit for different NSIs get more similar and closer to each other.
It means that the payments from different NSIs will be in a tighter range. Also, the NSIs would be making a smaller change in their payments in response to a change in the resource prices. This implies that a stable price (or the NE) can be established within a fewer number of iterations.

\begin{figure}[h!]
\centering
\includegraphics[width = 0.96\columnwidth]{./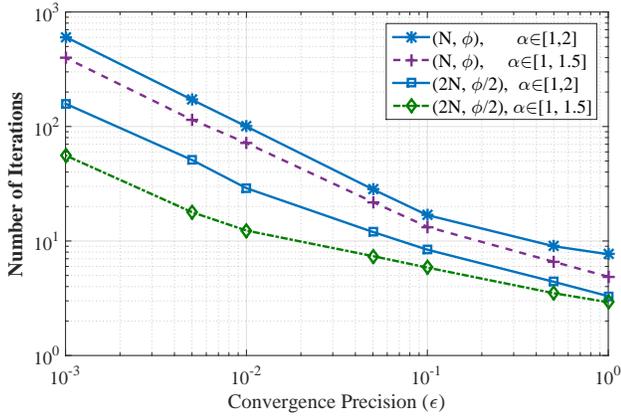}
\caption{The required number of iterations for convergence of the DRP mechanism to an $\epsilon$-boundary
of the optimal solution to Problem~1 under high loading conditions. The convergence performance is compared when $\alpha_n$ is taken (according to a uniform distribution) in two different ranges.}
\label{fig_num_convg2}
\end{figure}

To evaluate the performance of the proposed DRP mechanism in terms of \emph{resource utilization},
we compare it against some heuristic and \emph{greedy solutions}
which strive to allocate the whole resources according to some fairness criteria.
In particular, we compare the DRP mechanism with the multi-domain DRF mechanism which allocates resources to different NSIs
by employing dominant resource fairness (DRF \cite{DRF}) across different domains~\cite{halabian2019distributed} (c.f. Section~\ref{sec::ext::MRF}). We show that, however, the \emph{multi-domain DRF} mechanism does not satisfy the sharing incentive property. So, we also implement an extension to DRF, referred to as \emph{per-domain DRF} (c.f. Section~\ref{sec::ext::MRF}), which is shown to satisfy the sharing-incentive property.
We further compare the performance of these three mechanisms (i.e., DRP, multi-domain DRF, and per-domain DRF)
with a generic uniform allocation (described in Section~\ref{sec:main:ChSol}).

To observe how each of the above-described mechanisms performs compared to the uniform allocation, we find
  the allocated traffic volume to each NSI (under each allocation mechanism), and \emph{normalize} it by the allocated traffic under the uniform allocation. Such a parameter, denoted by $r_n$ for each NSI $n$, represents the improvement ratio by which the allocated traffic to NSI $n$ is increased compared to the uniform allocation. In Fig.~\ref{fig:numRes1} we plot $r_n$ for different NSIs, when implementing each of these mechanism for the same NSIs under a high loading condition. It can be observed that with multi-domain DRF, the allocated traffic volume for
two NSIs (index 21 and 24) is less than their allocated traffic volume under the uniform allocation.
It means that the multi-domain DRF does not satisfy the sharing incentive property.
However, it is observed that both of the DRP and per-domain DRF mechanisms out-perform the uniform allocation.
To affirm this observation, we repeat the same experiment for 100 times, generating demand profiles randomly each time.
Then, we find the empirical probability that the improvement ratio for an arbitrary NSI is greater than certain values, that is $P(R>r)$.
Fig.~\ref{fig:numRes11} shows that both of the DRP and per-domain DRF mechanisms always outperform the uniform allocation.
Moreover, the DRP mechanism is shown to provision (on average) more traffic volumes to different NSIs.

\begin{figure}[h!]
\centering
\includegraphics[width = 0.99\columnwidth]{./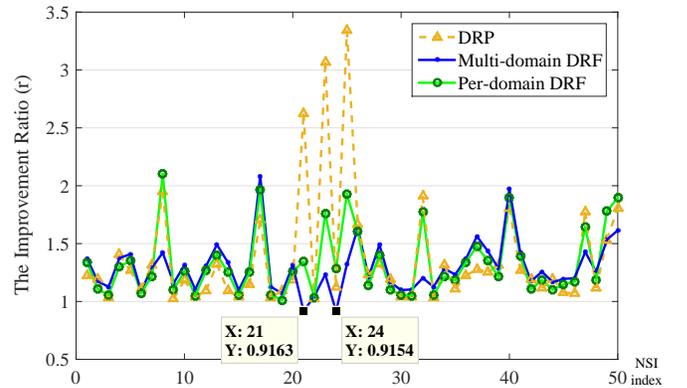}
\caption{Comparing the allocated traffic volume to each NSI under different resource provisioning mechanisms when normalized by the allocated traffic volume under the uniform allocation.}
\label{fig:numRes1}
\end{figure}

\begin{figure}[h!]
\centering
\includegraphics[width = 0.99\columnwidth]{./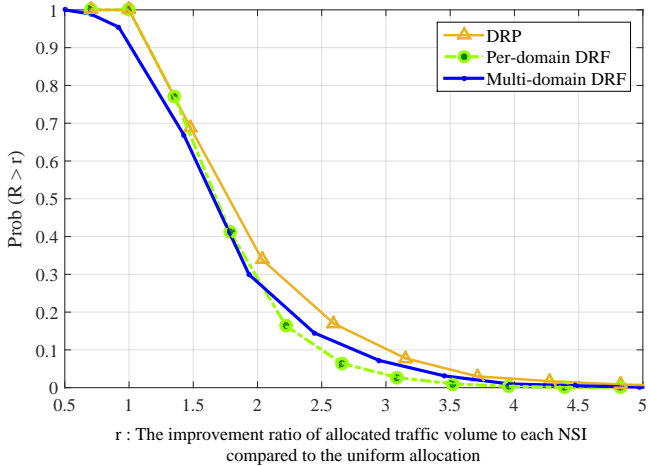}
\caption{The empirical probability that the normalized traffic volume for an arbitrary NSI is greater than certain values.}
\label{fig:numRes11}
\end{figure}

\begin{figure}[h!]
\centering
\includegraphics[width = 0.75\columnwidth]{./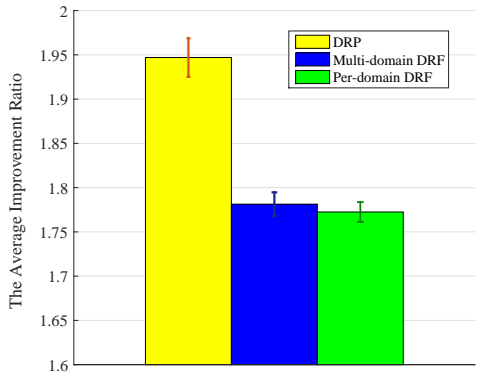}
\caption{The average improvement ratio of the end-to-end  capacity provisioned to each NSI compared to the uniform allocation.}
\label{fig:numRes_E2ECAP}
\end{figure}

Particularly, as plotted in Fig.~\ref{fig:numRes_E2ECAP}, the end-to-end capacity provisioned to each NSI (under a high-loading condition) is enhanced under the DRP mechanism by  $95\%$ compared to the uniform allocation, and by at least $10\%$ compared to the two other schemes. The end-to-end queuing delay for each NSI, on the other hand, depends on the provisioned capacity as well as the average traffic load that is admitted to the network. Specifically, if $\rho_{n,l}$ denotes the average traffic load that is admitted for an NSI in a particular area, the end-to-end queuing delay is conversely proportional to $x_{n,l}-\rho_{n,l}$~\cite{ETEDelay5G}. Indeed, a detailed quantitative analysis of delay depends on the admission and flow control policy, as well as other packet-level networking modules (including traffic shaping, packet segmentation, aggregation, or duplication  functionalities), which are out of the scope of this paper. Nevertheless, we can perform a comparative analysis for the end-to-end delay, comparing every two mechanisms while considering a certain traffic load for each NSI. Specifically, assume that the traffic load for each NSI is set to $95\%$ of the (minimum) capacity that can be provisioned under the DRP and every other resource provisioning mechanism. The improvement ratio of the end-to-end delay for the DRP mechanism compared to other schemes is summarized in Table~\ref{table4}. Indeed, an improvement of 10 to 15\% in the capacity provisioned by the DRP mechanism results in reducing the end-to-end queuing delay by a factor of 3 to 4 (compared to the per-domain or multi-domain DRF) under a high-loading condition.

\begin{table}
\vspace{+2mm}
\footnotesize
\caption{The average improvement ratio for the end-to-end delay of the DRP mechanism compared to other schemes.}
\label{table4}
\centering
\begin{tabular}{| l | c | c | c |}
\hline
\footnotesize\bf
Delay Improvement Ratio & Uniform & MD-DRF  & PD-DRF\\
\hline\hline
DRP over other schemes  & 19.94   &  4.08       &  3.24 \\\hline
\end{tabular}
\end{table}

To better observe the efficiency of the DRP mechanism in utilizing different resources,
the resource utilization that is achieved on average across all nodes under different mechanisms in a \emph{high loading condition}
is shown in Fig.~\ref{fig:numRes20}. The average resource utilization over different nodes in each domain is also shown in Fig.~\ref{fig:numRes21}-\ref{fig:numRes23}, respectively. All of the results are averaged over 100 experiments.
 The 95\% confidence interval is shown on the top of each bar graph.
 Our observations indicate that at least one of the resources (either CPU or communication bandwidth) is fully booked over each node in Domain 1 under each of the DRP, multi-domain DRF, and per-domain DRF mechanisms.
It means that there is a bottleneck (imposed by the limited resource capacities) on each routing path under a high loading condition.
Despite the greedy nature of the per-domain DRF and multi-domain DRF mechanisms,
it is observed that the average resource utilization that is achieved by the DRP mechanism is increased by around 10\% for all of the resources compared to the per-domain DRF and multi-domain DRF mechanisms, and by 20\% compared to the uniform allocation (see Fig.~\ref{fig:numRes20}).

\begin{figure}[h!]
\centering
\includegraphics[width = 0.99\columnwidth]{./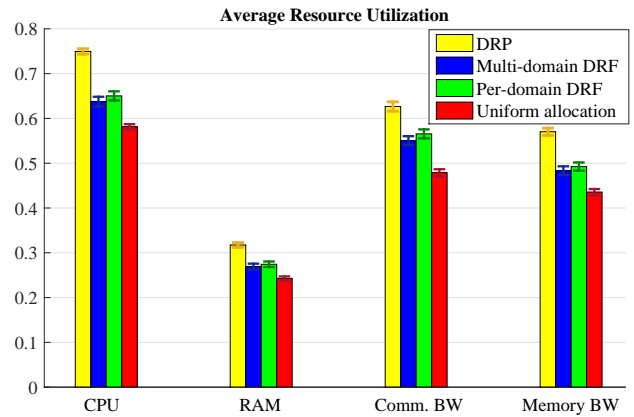}
\caption{The resource utilization when averaged over different nodes and over 100 experiments for different allocation mechanisms (from left to the right: DRP, multi-domain DRF, per-domain DRF, and uniform allocation) in a high-loading condition.}
\label{fig:numRes20}
\end{figure}

\begin{figure}[h!]
\centering
\includegraphics[width = 0.99\columnwidth]{./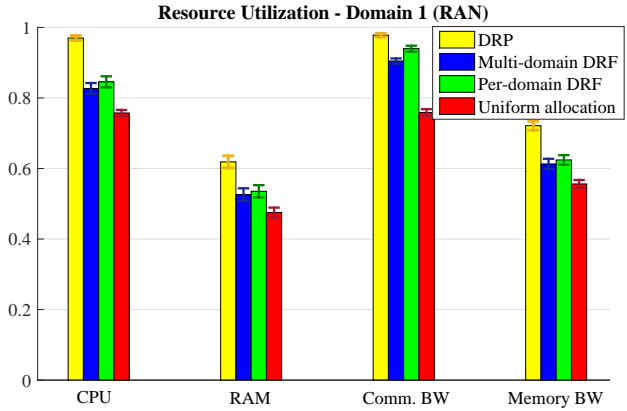}
\caption{The resource utilization that is achieved on average over different nodes in Domain 1 (i.e., RAN) under different allocation mechanisms.}
\label{fig:numRes21}
\end{figure}

\begin{figure}[h!]
\centering
\includegraphics[width = 0.99\columnwidth]{./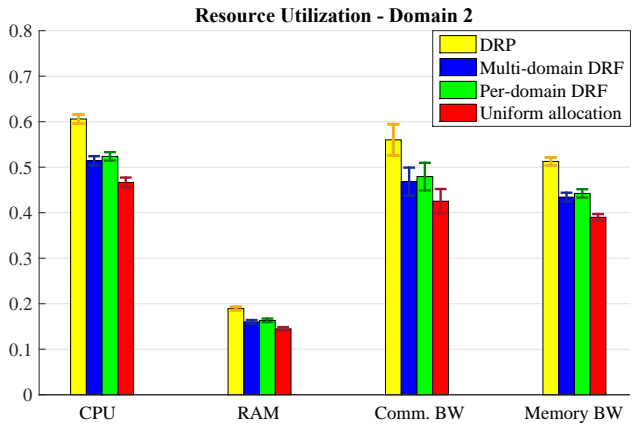}
\caption{The resource utilization that is achieved on average over Domain 2 data centers under different allocation mechanisms.}
\label{fig:numRes22}
\end{figure}

\begin{figure}[h!]
\centering
\includegraphics[width = 0.99\columnwidth]{./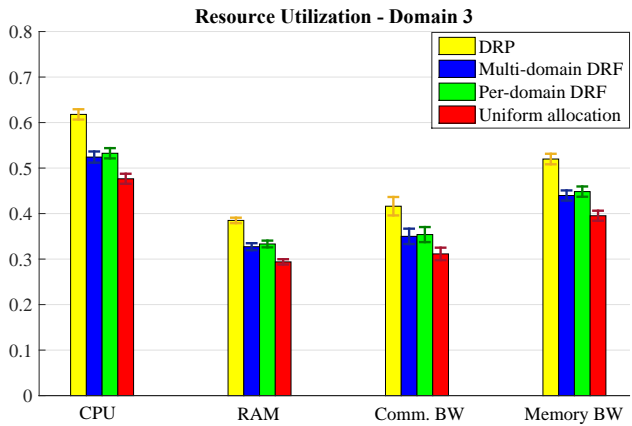}
\caption{The average resource utilization for the core network data center under different allocation mechanisms.}
\label{fig:numRes23}
\end{figure}

Another observation that we make here is on the operational expenditures that are imposed (on average) to the network
for provisioning each unit of traffic. The DRP mechanism may limit the allocated resources from different nodes in a low-loading condition
in a way that the operational expenditures are covered by the payment from different NSIs.
Hence, it may not be \emph{fair} to compare the \emph{absolute value} of OPEX (which is considerably reduced by the DRP in a low loading condition),  against greedy mechanisms such as multi-domain DRF or per-domain DRF. To make a fair comparison, we find the average operational expenditures for
\emph{each unit of traffic} under different mechanisms in low-loading and high loading conditions. As shown in Fig.~\ref{fig:numRes3}, the per unit OPEX  is reduced by the DRP mechanism under both low-loading and high-loading conditions.
While the DRP mechanism makes a better utilization of different resources in a high-loading condition,
 yet it results in less OPEX for each unit of traffic, owing to the optimal routing decisions.
The DRP mechanism results in a more considerable reduction in per unit OPEX in a low loading condition.
Intuitively, the DRP mechanism may throttle the allocated resources of costly nodes in a low-loading condition,
while allocating more resources from nodes with low operational costs. Making jointly optimal routing and resource provisioning decisions, the DRP mechanism reduces the per unit OPEX by 12\% compared to the per-domain DRF and multi-domain DRF in a low loading condition (see Fig.~\ref{fig:numRes3}).
It is worth noting that OPEX for other mechanisms does not change much with respect to loading conditions.

\begin{figure}[h!]
\centering
\includegraphics[width = 0.95\columnwidth]{./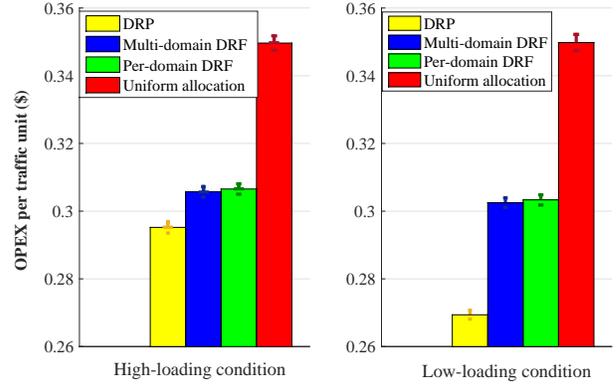}
\caption{The average operational expenditures to provision one unit of traffic under different allocation mechanisms.}
\label{fig:numRes3}
\end{figure}

\section{Conclusion}\label{sec:conclusion}
We proposed an agile and distributed mechanism for end-to-end resource provisioning to NSIs in a multi-domain mobile network environment.
In the proposed solution, each network slice tenant finds the optimal traffic volumes for different paths (comprising different chains-of-VNFs)
so as to maximize a local payoff function.
Based on the solution to the network slice optimization problem, each network slice tenant decides on the amount of resources which should be acquired
from the service provider(s) in each domain, and accordingly bids for the required resources. Given the payments from different NSIs,
each service provider decides on the resource prices, and then allocates resources to different NSIs. We showed that such an auction game has a
unique NE (in terms of NSI traffic volumes), which maximizes social welfare among network slice tenants, while minimizing OPEX for service providers.
Making optimal routing and resource provisioning decisions while employing a cost-aware resource pricing scheme,
the DRP mechanism is shown to reduce the OPEX for provisioning each unit of traffic,
while enhancing the resource utilization of the infrastructure network at the same time.
The proposed DRP mechanism is distinguished from the existing works in the literature, owing to
 generality of the model, agility of the solution, and the possibility for a distributed implementation
 without sharing any private information among different parties.
The DRP mechanism is superior (in terms of resource utilization and OPEX) not only to the existing solutions,
but also their enhanced versions proposed in this study. 
An extension wherein the network slice manager integrates the achievable QoS to the NSI's utility function
can be part of the future work. Also a thorough analysis on the convergence behavior of the DRP mechanism for a variety of practical
 utility functions can be addressed in a future study.

\bibliographystyle{IEEEtran}
\bibliography{scheduling}

\appendix
\begin{proof}[Proof of Theorem~\ref{th_nash_opt}]
Problem~1 is a convex optimization problem. An allocation is an optimal solution to this problem if and only if there exists a set of multipliers $\lambda_{i,r}$, and $\nu_{n}^p$ (corresponding to the constraints in \eqref{global_c1} and \eqref{global_c2}, respectively) so that KKT conditions are satisfied\footnote{The notation $x\perp y$ means  $xy=0$.}~\cite{Boyd}
\be
&& \frac{\partial U_{n,l}(x_{n,l})}{\partial x_{n,l}}=\sum_{i\in p}\sum_r(\lambda_{i,r}+q_{i,r})d^p_{n,i,r} - \nu_{n}^p, ~\forall n,p,\label{KKTP2_1}\qquad\\
&& 0 \le C_{i,r} - \sum_n\sum_{p:i\in p}x_n^pd^p_{n,i,r} \perp \lambda_{i,r}\ge 0, ~~\forall i,r,\label{KKTP2_2}\\
&& 0 \le x_n^p \perp \nu_n^p\ge 0, ~~\forall n,p.  \label{KKTP2_3}
\ee

When ${\bf x}$ is an NE for DRP mechanism in conjunction with some resource prices $\{\mu_{i,r}\}$, it will be an optimal solution to Problem~2
for every NSI $n$. 
Problem~2 is a convex optimization problem in terms of ${\bf x}_n$ for each NSI $n$. It follows that ${\bf x}_n$ is an optimal solution to Problem~2 if (and only if) there exists a set of multipliers $\{v_{n}^p\}$ so that
\be
&& \frac{\partial U_{n,l}(x_{n,l})}{\partial x_{n,l}}=\sum_{i\in p}\sum_r\mu_{i,r}d^p_{n,i,r} - v_{n}^p, ~~\forall p,\label{KKT_Local_1}\\
&& 0 \le x_n^p \perp v_n^p\ge0, ~~\forall p,  \label{KKT_Local_2}
\ee
where $\mu_{i,r}$ is set according to \eqref{res_price}, so $\mu_{i,r}\ge q_{i,r},~\forall i,r$. We show that the conditions in \eqref{KKTP2_1}-\eqref{KKTP2_3} are satisfied when choosing $\lambda_{i,r}:=\mu_{i,r}-q_{i,r}$, and $\nu^p_{n}=v_n^p$.  It means that we may find a set of multipliers in conjunction with each NE of DRP mechanism, so that the KKT conditions in  \eqref{KKTP2_1}-\eqref{KKTP2_3} are satisfied. This, in turn, implies that each NE of DRP mechanism is an optimal solution to Problem~1.  It is straightforward to reach \eqref{KKTP2_1} and \eqref{KKTP2_3} when choosing $\lambda_{i,r}:=\mu_{i,r}-q_{i,r}\ge0$, and $\nu^p_{n}=v_n^p$.  To observe \eqref{KKTP2_2}, one may substitute for $w_{n,i,r}$ from  \eqref{eq_bids} into \eqref{res_price}, which results in an updated resource price,
\be
\hat{\mu}_{i,r} =   \max\{q_{i,r}, \frac{\sum_n\sum_{p:i\in p}x_n^p d^p_{n,i,r}}{C_{i,r}}\mu_{i,r}\}.\label{res_price_updated}
\ee
At the NE we should have $\hat{\mu}_{i,r}=\mu_{i,r},~\forall i,r$. This is established only if  $\sum_n\sum_{p:i\in p}x_n^p d^p_{n,i,r}\le C_{i,r}$ for every node $i$ and resource $r$ with $\mu_{i,r}=q_{i,r}$, and $\sum_n\sum_{p:i\in p}x_n^p d^p_{n,i,r} = C_{i,r}$ for every node $i$ and resource $r$ with $\mu_{i,r}>q_{i,r}$.
This is exactly equivalent to the condition in \eqref{KKTP2_2} when choosing $\lambda_{i,r}=\mu_{i,r}-q_{i,r}$.

Now, consider an allocation ${\bf x}:=\{x_{n}^p\mid n\in\mathcal{N}, p\in\mathcal{P}_l, l=1,2,..,A\}$, along with a set of multipliers which satisfy the conditions in \eqref{KKTP2_1}-\eqref{KKTP2_3}. We show that ${\bf x}$ is an NE for DRP mechanism in conjunction with the resource prices chosen as $\mu_{i,r}:=q_{i,r}+\lambda_{i,r}$. To have an NE, ${\bf x}_n$ should be an optimal solution to Problem~2 for every NSI $n$.
On the other hand, ${\bf x}_n$ is an optimal solution to Problem~2, if it satisfies conditions in \eqref{KKT_Local_1}-\eqref{KKT_Local_2},
which is the case when choosing $\mu_{i,r}:=q_{i,r}+\lambda_{i,r}$, and $v_n^p=\nu_n^p$ (c.f. \eqref{KKTP2_1} and \eqref{KKTP2_3}).
Finally, the condition in \eqref{KKTP2_2} implies that the resource prices remain steady in DRP mechanism (i.e., $\hat{\mu}_{i,r}=\mu_{i,r}$), when choosing $\mu_{i,r}=q_{i,r}+\lambda_{i,r}$ (c.f. \eqref{res_price_updated}).
\end{proof}

\begin{proof}[Proof of Theorem~\ref{TH_prop_fair}]
Problem~3 describes a convex optimization problem. An allocation ${\bf x}$ 
is a solution to this problem if and only if there exists a set of multipliers $\lambda_{i,r}$, and $\nu_{n}^p$ (corresponding to the constraints in \eqref{global_SN_c1} and \eqref{global_SN_c2}, respectively) so that KKT conditions are satisfied~\cite{Boyd}
\be
&& \frac{w^*_{n,l}}{x_{n,l}}=\sum_{i\in p}\sum_r(\lambda_{i,r}+q_{i,r})d^p_{n,i,r} - \nu_{n}^p, ~\forall n,p,\label{KKTP3_1}\qquad\\
&& 0 \le C_{i,r} - \sum_n\sum_{p:i\in p}x_n^pd^p_{n,i,r} \perp \lambda_{i,r}\ge 0, ~~\forall i,r,\label{KKTP3_2}\\
&& 0 \le x_n^p \perp \nu_n^p\ge 0, ~~\forall n,p.  \label{KKTP3_3}
\ee

It should be noted that $w^*_{n,l}$ is the payment made by NSI $n$ in area $l$, when DRP mechanism is in an NE equilibrium.
It means that (c.f. \eqref{paymentPerRegion})
\be
{w^*_{n,l}} = {x^*_{n,l}}  \sum_{i\in p}\sum_r(\lambda^*_{i,r}+q_{i,r})d^p_{n,i,r}, ~p\in{\mathcal P}^*_{n,l},\label{Th2_F4}
\ee
where $x^*_{n,l}$ is the solution at the NE, and ${\mathcal P}^*_{n,l}$ is the set of paths with minimum transmission cost for NSI $n$ in area $l$.
According to Theorem~1, ${\bf x}^*$ is a solution to \eqref{KKTP2_1}-\eqref{KKTP2_3} in conjunction with $\{\lambda^*_{i,r}\}$ and $\{\nu_n^{p*}\}$.
It follows from \eqref{Th2_F4} and \eqref{KKTP2_1}-\eqref{KKTP2_3} that ${\bf x}^*$ is also a solution to \eqref{KKTP3_1}-\eqref{KKTP3_3} when choosing $\lambda_{i,r} = \lambda^*_{i,r},~\forall i,r$ and $\nu_n^p=\nu_n^{p*},~\forall n,p$.
In the same way, it can be observed that every solution to Problem~3 (satisfying the conditions in \eqref{KKTP2_1}-\eqref{KKTP2_3}) is a solution to Problem~1.
\end{proof}

\begin{proof}[Proof of Theorem~\ref{TH_properties}]
According to Theorem~\ref{TH_prop_fair}, an allocation is an NE for the DRP mechanism if and only if it is a solution to Problem~3.
Let $\lambda_{i,r}$, and $\nu_{n}^p$, respectively, denote the Lagrange multipliers corresponding to the constraints in \eqref{global_SN_c1} and \eqref{global_SN_c2}. For every path $p\in\mathcal{P}_l$ and NSI $n$ it follows that
\be
\frac{w^*_{n,l}}{x_{n,l}}=\sum_{i\in p}\sum_r(\lambda_{i,r}+q_{i,r})d^p_{n,i,r} - \nu_{n}^p,\label{Pf3_1}
\ee
where $\nu_n^p=0$, when $x_n^p>0$.

To show \emph{envy-freeness}, we show that each NSI $n$ would not prefer the allocated resources to another NSI $m$ over any path $p$, when adjusted according to their payments. The resources allocated from data center $i$ to NSI $m$ for $x_{m}^p$ unit of traffic is given by $[x_m^pd^p_{m,i,r}]$, $r=1,2,...,M$.
The payment of NSI $m$ for this amount of traffic is given by $w_m^{p*}:=x_m^pw^*_{m,l}/x_{m,l}$.
Such resources are preferred by NSI $n$ if
\be
\frac{x_{n}^pd^p_{n,i,r}}{w_n^{p*}} < \frac{x_{m}^pd^p_{m,i,r}}{w_m^{p*}}, ~\forall r,i\in p,\label{Pf3_2}
\ee
or equivalently (by substituting for $w_n^{p*}$ and $w_m^{p*}$),
\be
\frac{x_{n,l}d^p_{n,i,r}}{w^*_{n,l}} < \frac{x_{m,l}d^p_{m,i,r}}{w^*_{m,l}}, ~\forall r,i\in p.\label{Pf3_3}
\ee

Consider some path $p$ for which $x_m^p>0$, so that $\nu_m^p=0$. It follows from \eqref{Pf3_1} that
\be
\frac{w^*_{m,l}}{x_{m,l}}=\sum_{i\in p}\sum_r\mu_{i,r}d^p_{m,i,r},\label{Pf3_4}\\
\frac{w^*_{n,l}}{x_{n,l}}\le\sum_{i\in p}\sum_r\mu_{i,r}d^p_{n,i,r},\label{Pf3_5}
\ee
where $\mu_{i,r}:=\lambda_{i,r}+q_{i,r}$.
Multiplying both sides of \eqref{Pf3_5} by $x_{n,l}/w^*_{n,l}$, and using the inequality in \eqref{Pf3_3}, result in
\be
1 &\le& \sum_{i\in p}\sum_r\mu_{i,r}\frac{x_{n,l}}{w^*_{n,l}}d^p_{n,i,r}\label{Pf3_6}\\
  & < & \sum_{i\in p}\sum_r\mu_{i,r}\frac{x_{m,l}}{w^*_{m,l}}d^p_{m,i,r}=1,\label{Pf3_7}
\ee
which is a contradiction.

Consider an NE resulting from the DRP mechanism. To prove the \emph{sharing incentive} property we need to show that each NSI is provided with more traffic volume (under the NE) compared to the uniform allocation. To characterize the uniform allocation,
let define $\gamma^p_{n,i}$ as the (maximum) volume of traffic which can be processed for NSI $n$ through path $p$ when monopolizing
the \emph{whole resources} allocated from node $i\in p$ under the NE. That is,
\be
\gamma^p_{n,i}:=\min_{r}\frac{\eta_{i,r}C_{i,r}}{d^p_{n,i,r}}, \label{Pf3_8}
\ee
where $\eta_{i,r}$ is the portion of resource $r$ that is utilized at the NE.
The (maximum) volume of traffic which can be processed for NSI $n$ through path $p$ is given by
\be
x^{p,\text{uni}}_{n}:=\min_{i\in p}\frac{w^{p*}_{n,i}}{W^*_i}\gamma^p_{n,i},
\ee
where
\be
W^*_i:=\sum_{m,p':i\in p'}w^{p*}_{n,i}.
\ee
In the following we show that $x_n^p\ge x^{p,\text{uni}}_{n}$ for every path $p\in\mathcal{P}_l$ with $w^{p*}_{n}>0$.
In particular, for every path $p$ with $w^{p*}_{n}>0$,
if we multiply both sides of \eqref{Pf3_1} by
${w^{p*}_{n,i}}\gamma^p_{n,i}/{W^*_i}$, for every node $i\in p$, it follows that
\be
\gamma^p_{n,i}\frac{w^{p*}_{n,i}}{W^*_i}\frac{w^*_{n,l}}{x_{n,l}} \le
 w^{p*}_{n,i} + \gamma^p_{n,i}\frac{w^{p*}_{n,i}}{W^*_i}\sum_{i'\in p, i'\neq i}\sum_r\mu_{i',r}d^p_{n,i',r}\label{Pf3_11},
\ee
where the inequality follows from the fact that $\gamma^p_{n,i}d^p_{n,i,r}\le \eta_{i,r}C_{i,r},~\forall r$ (c.f. \eqref{Pf3_8}), and
$\sum_r\mu_{i,r}\eta_{i,r}C_{i,r}=W^{*}_i$. In the same way (multiplying both sides of \eqref{Pf3_11} by ${w^{p*}_{n,i'}}\gamma^p_{n,i'}/{W^*_{i'}}$), it can be observed that
\be
\min_{i\in p}\{\gamma^p_{n,i}\frac{w^{p*}_{n,i}}{W^*_i}\}\frac{w^*_{n,l}}{x_{n,l}}\le \sum_{i\in p}w^{p*}_{n,i} = w^{p*}_n.
\ee
On the other hand, $\frac{w^*_{n,l}}{x_{n,l}}=\frac{w^{p*}_{n}}{x^p_{n}}$ for every path $p\in\mathcal{P}_l$ with $w^{p*}_{n}>0$.
Hence,
\be
\frac{w^{p*}_{n}}{x^p_{n}} x^{p,\text{uni}}_{n}  \le w^{p*}_n,
\ee
or, $x_n^p\ge x^{p,\text{uni}}_{n}$.
\end{proof}


\end{document}